\documentclass[journal]{IEEEtran}

\usepackage{subfigure}
\usepackage{pdfpages}
\usepackage{diagbox}
\usepackage{colortbl}
\definecolor{mygray}{gray}{0.85}
\usepackage{array}
\usepackage{caption}
\usepackage{flushend}  %双栏布局最后一页底部对齐的处理
\usepackage[linesnumbered,boxed,ruled,commentsnumbered]{algorithm2e}%%算法包，注意设置所需可选项 algorithm \IncMargin{1em}
 \usepackage{enumerate}
\usepackage{epstopdf}
\usepackage{array,multirow,diagbox}
\usepackage{amsmath,bm}
\usepackage{cite}
\usepackage{amsmath,amssymb,amsfonts}
\usepackage{algorithmic}
\usepackage{amsthm}
\usepackage{graphicx}
\usepackage{textcomp}
\usepackage{xcolor}

\begin{document}

\title{\Huge{Massive Access of Static and Mobile Users via Reconfigurable Intelligent Surfaces: \\Protocol Design and Performance Analysis}}

\author{
\IEEEauthorblockN{Xuelin Cao, Bo Yang, Chongwen Huang,
  George C. Alexandropoulos, \IEEEmembership{Senior Member,~IEEE},
  Chau Yuen,~\IEEEmembership{Fellow,~IEEE},
  Zhu Han, \IEEEmembership{Fellow,~IEEE}, 
  H. Vincent Poor,~\IEEEmembership{Life Fellow,~IEEE}, and 
  Lajos Hanzo,~\IEEEmembership{Fellow,~IEEE}
  }
\thanks{X. Cao is with the School of Cyber Engineering, Xidian
University, Xi’an 710071, China, and also with the Engineering Product Development Pillar, Singapore University of Technology and Design, Singapore 487372 (e-mail: xuelin\_cao@sutd.edu.sg).}
\thanks{B. Yang and C. Yuen are with the Engineering Product Development Pillar, Singapore University of Technology and Design, Singapore 487372 (e-mails: bo\_yang, yuenchau@sutd.edu.sg).}
\thanks{C. Huang is with College of Information Science and Electronic Engineering, Zhejiang University, Hangzhou 310027, China, and with International Joint Innovation Center, Zhejiang University, Haining 314400, China, and also with Zhejiang Provincial Key Laboratory of Info. Proc., Commun. \& Netw. (IPCAN), Hangzhou 310027, China. (e-mail: chongwenhuang@zju.edu.cn ).}
\thanks{G. C. Alexandropoulos is with the Department of Informatics and Telecommunications, National and Kapodistrian University of Athens, 15784 Athens, Greece (e-mail: alexandg@di.uoa.gr).}
\thanks{Z. Han is with the Department of Electrical and Computer Engineering in the University of Houston, Houston, TX 77004 USA, and also with the Department of Computer Science and Engineering, Kyung Hee University, Seoul, South Korea, 446-701. (e-mail: zhan2@uh.edu).}
\thanks{H. V. Poor is with the Department of Electrical and Computer Engineering, Princeton University, Princeton, NJ, 08544, USA. (e-mail: poor@princeton.edu).}
\thanks{L. Hanzo is with the School of Electronics and Computer Science, University of Southampton, Southampton, SO17 1BJ, UK. (e-mail: lh@ecs.soton.ac.uk).}
 }
\maketitle

\begin{abstract} 
The envisioned wireless networks of the future entail the provisioning of massive numbers of connections, heterogeneous data traffic, ultra-high spectral efficiency, and low latency services. This vision is spurring research activities focused on defining a next generation multiple access (NGMA) protocol that can accommodate massive numbers of users in different resource blocks, thereby, achieving higher spectral efficiency and increased connectivity compared to conventional multiple access schemes. In this article, we present a multiple access scheme for NGMA in wireless communication systems assisted by multiple reconfigurable intelligent surfaces (RISs). In this regard, considering the practical scenario of static users operating together with mobile ones, we first study the interplay of the design of NGMA schemes and RIS phase configuration in terms of efficiency and complexity. Based on this, we then propose a multiple access framework for RIS-assisted communication systems, and we also design a medium access control (MAC) protocol incorporating RISs. In addition, we give a detailed performance analysis of the designed RIS-assisted MAC protocol. Our extensive simulation results demonstrate that the proposed MAC design outperforms the benchmarks in terms of system throughput and access fairness, and also reveal a trade-off relationship between the system throughput and fairness.
\end{abstract}

\begin{IEEEkeywords}
Next generation multiple access, reconfigurable intelligent surfaces, MAC efficiency, access fairness.
\end{IEEEkeywords}

%%%%%%%%%%%%%%%%%%%%%%%%%%%%%%%%%%%%%%%%%%%%%%%%%%%%%%%%%%%%%%%%%%%%%%%%%%%%%%%%%%%%%%%%%%%%%%

%%%%%%%%%%%%%%%%%%%%%%%%%%%%%%%%%%%%%%%%%%%%%%%%%%%%%%%%%%%%%%%%%%%%%%%%%%%%%%%%%%%%%%%%%%%%%%
%%%  Introduction
%%%%%%%%%%%%%%%%%%%%%%%%%%%%%%%%%%%%%%%%%%%%%%%%%%%%%%%%%%%%%%%%%%%%%%%%%%%%%%%%%%%%%%%%%%%%%%
\section{Introduction}

\IEEEPARstart{W}ith the envisioned demands for access by massive numbers of users, high spectral/energy efficiency (SE/EE), and low-cost services (e.g., virtual/augmented reality (VR/AR), holographic telepresence, etc.) for the forthcoming Sixth Generation (6G) networks, research in future wireless communications continues to focus on the design of next generation multiple access (NGMA). To improve the SE/EE and quality-of-service (QoS), the NGMA approaches have to overcome the limitations of multiple access schemes in current wireless network standards by leveraging the envisioned gains of emerging techniques, such as reconfigurable intelligent surfaces (RISs) and artificial intelligence (AI), as well as new technologies yet to be defined \cite{WSaad, KBLetaief, NKato}. These technologies enabling or being enabled by  NGMA also give impetus to the design of medium access control (MAC) protocols, involving joint communication, control, and computing functionalities, which are expected to be vital for highly efficient massive multiple access \cite{YLiu, XLiu, XCao}.

In the development process of multiple access technologies, conventional orthogonal multiple access (OMA) schemes allow each user's transmission in an orthogonal way, thereby simplifying the transceiver design and avoiding interference among users, such as time-division multiple access (TDMA), frequency-division multiple access (FDMA), and code-division multiple access (CDMA). Compared to the family of OMA schemes, the non-orthogonal multiple access (NOMA) schemes have already been investigated in 5G networks due to bringing an additional degree of freedom in the power domain, which help each user achieve its QoS target \cite{YChen, MVaezi, YLiu1, ZZhang}. However, both OMA and NOMA schemes have limitations. Specifically, for OMA, the multiple access efficiency is limited by the radio resources and the signaling overhead, while for NOMA, most research thus far has focused on static and broadband users, but without considering mobility and randomness\cite{YLiu}.

\subsection{Motivation}
Next generation wireless networks are rapidly evolving toward a distributed intelligent communication, sensing, and computing platform, realizing the software-based network functionality paradigm. Efficient NGMA schemes are required to adapt to this trend. In this unifying context, there are several challenges to be addressed by NGMA, with two of the most prominent being the following:
\begin{itemize}
\item \textbf{Challenge-1:} How to improve the throughput performance of NGMA approaches when operating in complex wireless environments, where there exist both static and mobile users?
\item \textbf{Challenge-2:} How to achieve improved connectivity via NGMA schemes, while guaranteeing access fairness?
\end{itemize}

Advances in metamaterials have recently fuelled research in RISs for beneficially reconfiguring wireless communication environments with the aid of large planar arrays of low-cost reconfigurable elements \cite{di2019smart, Liaskos, GCA0, SHu}, RISs are becoming a potential solution to tackle the above mentioned challenges. Facing a practical network consisting of both static and mobile users, the throughput performance and connectivity of traditional multiple access schemes are significantly affected by the randomness and mobility of users. For such cases, RISs can be incorporated into the NGMA design to enhance the wireless communication links of static and mobile users simultaneously, and thus improve the throughput and connectivity performance. However, RIS-assisted NGMA approaches will face the complex problem of the RIS design for massive numbers of static and mobile users, which refers to multi-user resource allocation and the RIS phase configuration optimization. 

Motivated by these potential advantages, in this paper, we investigate the intricate interplay between the MAC protocol and the RIS configuration, as highlighted in Fig. \ref{structure}. On the one hand, we demonstrate that the low-complexity of the RIS phase profile configuration and implementation significantly improves the MAC efficiency. On the other hand, our efficient MAC protocol supports the coexistence of static and mobile users in conjunction with our low-complexity RIS configuration, thereby supporting a massive number of users via multiple RISs. Based on these compelling features, an efficient NGMA scheme is indeed eminently suitable for next-generation wireless communication systems.

\begin{figure}[t]
  \captionsetup{font={small}}
\centerline{ \includegraphics[width=3in, height=1.5in]{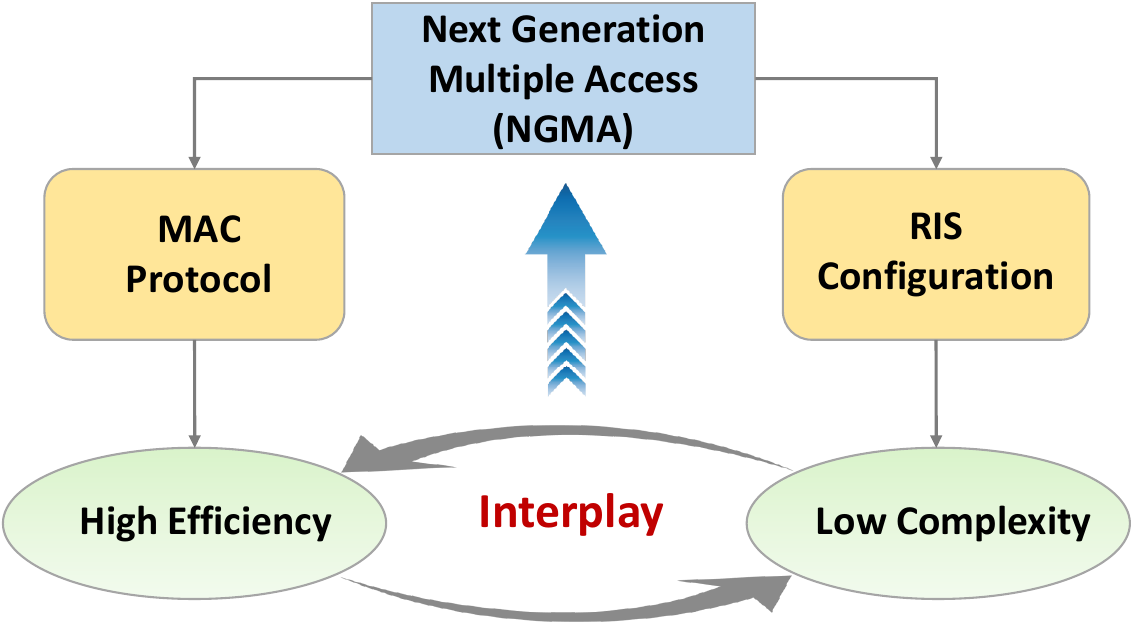}}
%\vspace{1.5mm}
\caption{The interplay of MAC protocol and RIS configuration.}
\label{structure}
\end{figure}

%including transmission efficiency, fairness, power consumption 

%its utilization and configuration complexity.

\subsection{State-of-the-Art}
\label{related_work}
Recently, RISs have introduced some significant changes and new opportunities for wireless communications \cite{ Huang2020, Renzo, MA, GCA00}. This new paradigm results in the migration from traditional wireless connections to ``intelligent-and-reconfigurable connections".
\subsubsection{RIS Configuration in Wireless Communications}
Being a newly proposed paradigm going beyond massive multiple-input multiple-output (MIMO), RISs featured with low-cost, ultra-thin, light-weight, and low power consumption hardware structures provide a transformative means of wireless environments into a programmable smart entity. In the context of RIS-aided communications, the authors of \cite{wu2018intelligent, Huang2019, guo2020weighted, zhang2020reconfigurable, SLi, abeywickrama} focused their significant attention on the configuration of RISs. Explicitly, Wu \textit{et al.} studied the problem of joint active and passive beamforming \cite{wu2018intelligent}. To achieve high energy efficiency, Huang \textit{et al.} investigated an RIS-assisted downlink multi-user system by joint optimizing the transmit power and the passive beamforming \cite{Huang2019}. To increase the sum rate, Guo \textit{et al.} studied an RIS-aided multi-user multiple-input single-output downlink system by jointly designing the beamforming and RIS phase shifts \cite{guo2020weighted}. To assess the effect of RIS phase shifts on the data rate, Zhang \textit{et al.} jointly optimized the number of RIS phase shifts and RIS reflection beamforming for RIS-assisted communication systems \cite{zhang2020reconfigurable}. Li \textit{et al.} jointly designed the trajectory and the RIS reflect beamforming for UAV communications \cite{SLi}. Abeywickrama \textit{et al.} investigated a practical RIS phase shift model, and jointly designed the transmit beamforming and the RIS reflect beamforming \cite{abeywickrama}. In addition, the physical-layer security of RIS-assisted systems was analyzed in \cite{yu2019enabling}. Deep learning technologies in RIS-aided systems were explored in\cite{add2, GCA1, 9110869}. Deep learning technologies instead of conventional optimization methods were investigated for RIS-assisted aerial-terrestrial communications \cite{yang2020computation, XCao2}. 

%To achieve low-overhead channel estimation, Wei \textit{et al.} optimized the RIS reflecting matrix for multi-user multiple input single output system \cite {LWei}.

\subsubsection{MAC Protocol for RIS-Assisted Communications}
With the development of the physical layer technological breakthrough on RISs, an enormous amount of research effort focus on the RIS-assisted multi-user communication system, especially its MAC protocol for system improvement. Until now, the distributed or the centralized MAC protocols for the system performance improvement have been proposed for RIS-assisted communications. To be specific, the TDMA-based scheme was present to enable multiple users' communications via RISs on the same frequency in different time slots. For example, Hu \textit{et al.} designed a frame-based MAC protocol for the RIS-assisted sensing system to achieve an accurate posture recognition \cite {JHu}. Bai \textit{et al.} proposed a TDD transmission protocol for RIS-aided mobile edge computing (MEC) systems \cite{TBai}. Cao \textit{et al.} proposed a frame-based MAC protocol to converge RIS and MEC into space information networks \cite {XCao3}, and Yang \textit{et al.} extended these investigations to discuss an RIS-assisted intelligent spectral learning system \cite{BYang}.

The FDMA-based scheme was adopted as well as by multiple users to communicate via RISs in the same time slot on non-overlapping domain frequency channels, and for example, Yang \textit{et al.} proposed a practical OFDM-based transmission protocol for RIS-enhanced communication system \cite{yang2020intelligent}. Jung \textit{et al.} also investigated the RIS-aided transmission protocol combined TDD with OFDMA schemes to achieve  user scheduling and power control \cite{MJung}. Moreover, the SDMA-based scheme was used to support communications among users via RISs either in a unique angular direction or by spatial multiplexing \cite{ZDing}. Furthermore, NOMA schemes were conceived for enhancing the multiple access performance of RIS-assisted multi-user communications \cite{ZDing1, LLV, XMu, WNi}. In contrast to these centralized MAC protocols, Cao \textit{et al.} designed a distributed MAC protocol for RIS-assisted multi-user system with considering mobility and randomness of users \cite{XCao, XCao1}. Additionally, efficient AI-based RIS-assisted MAC protocols have been investigated in \cite {XCao, XCao2}.

%Based on this investigation, Lin \textit{et al.} proposed an adaptive transmission protocol that combined TDD and OFDM schemes for delay-sensitive applications of RIS-assisted communications \cite{SLin}. You \textit{et al.} extended OFDM scheme to achieve the RIS reflecting matrix design for channel estimation as well as the passive beamforming design for data transmissions \cite {CYou}.

%this is not a simple problem that `one plus one is equal to two' but a `one plus one is greater than two'. 

\subsection{Contributions and Organizations}

%In this paper, our goal is to propose a joint design framework to improve the MAC efficiency as well as to decrease RIS configuration complexity, thereby enhancing the performance of RIS-assisted communications in a cooperative way, especially for the next generation communications (e.g., 6G or 5G beyond) with deep fading channels. 

The major contributions are summarized as follows:
\begin{itemize}
\item 
\textbf{Framework design:} To improve the throughput performance of NGMA, we propose an RIS-assisted multiple access framework. In the proposed framework, the static and mobile users can communicate with the base station (BS) via RISs through different multiple access schemes at a low cost. %Additionally, the different complexity RISs configurations design for the different types of users is capable of improving the efficiency of the multiple access. 
 
\item 
\textbf{MAC protocol:} To achieve high connectivity and access fairness of NGMA, we design a MAC protocol that integrates the scheduled-based and contention-based schemes into a frame. By implementing different RIS configuration on different types of users, we achieve high efficient RIS-assisted multiple access, while considering randomness and mobility.  

\item 
\textbf{Analysis and optimization:} We first analyze the system throughput performance of the proposed MAC protocol. Then, we formulate a joint optimization problem to maximize the system throughput, while guaranteeing the fairness of users. To solve the formulated problem, we decompose the original problem into two sub-problems: the MAC design problem and the RIS phase configuration problem, and then an alternating optimization technique is adopted to solve them.

\item 
\textbf{Performance evaluation:} We evaluate the proposed MAC protocol in terms of system throughput and fairness. Simulation results reveal a trade-off relationship between the system throughput and fairness, and demonstrate that our MAC design outperforms benchmarks in terms of system throughput and access fairness.
\end{itemize}

%To our best knowledge, this is the \textit{first work that investigates the interplay between the MAC protocol design and the RIS configuration design for RIS-assisted communications.} 

The rest of this paper is organized as follows. We propose a multiple access framework for RIS-assisted multi-user communications in Section~\ref{MA framework}. We then design a MAC protocol for the proposed framework in Section~\ref{MA protocol}. Next, we analyze the system performance of the designed MAC in Section~\ref{MA performance}. Furthermore, we formulate a joint optimization problem that includes the MAC protocol and the RIS configuration to maximize the system performance, and we also solve the formulated mixed-integer nonlinear programming (MINLP) problem in Section~\ref{MA soving}. Simulation results are discussed in Section~\ref{result}. Finally, conclusions are drawn in Section~\ref{conclusion}.

\textit{Notations:} As per the traditional notation, a bold letter indicates a vector or matrix. $\max \{  \cdot \}$ and $\min \{  \cdot \}$ represent the maximum value and the minimum value, respectively. The amplitude of a complex number $x$ is denoted by $\left | x \right |$. The main notation we use is listed in Table \ref{Tab}.
 
% An upper case letter indicates a random variable or random parameter and a lower case letter indicates a realization of a random variable or random parameter.  

\begin{table}[t]
		\small 
		\centering
			% increase table row spacing, adjust to taste
			\renewcommand{\arraystretch}{1.1}
			\captionsetup{font={small}} 
			\caption{\scshape List of Main Notation} 
			\label{Tab}  
			\centering  
			\begin{tabular}{| c | l |}  
				\hline
				\textbf{Notation} & \textbf{Definition} \\
				\hline 
				$\mathcal{K}$ & The set of $K$ existing users \\ 
				\hline 
			    $\mathcal{M}$ & The set of $M$ RISs\\
			    \hline
			    $\mathcal{N}$ & The set of $N$ reflecting elements on one RIS\\
			    \hline
			    $\mathcal{C}$ & The set of $C$ sub-channels\\
			    \hline
			    $\mathcal{J}$ & The set of $J$ data slots on each sub-channel\\
			    \hline
			    $\mathcal{X}$ & The set of $X$ static users\\
			    \hline
			    $\mathcal{Y}$ & The set of $Y$ mobile users\\
			    \hline
			    $Z$ & The number of new mobile users\\
			    \hline
  		        $U_k$ & The $k$-th user\\
			    \hline
			    $R_m$ & The $m$-th RIS\\
			    \hline
			    ${\mathbf{g}}_{km}$ & The vector of reflected path between $U_k$ and $R_m$\\ 
				\hline 
			    ${\mathbf{h}}_{km}$ & The vector of reflected path between $R_m$ and BS\\
			    \hline
			    $r_k$ & The direct path between $U_k$ and BS \\
			    \hline
			    $D_{cj}$ & The $j$th data slot on sub-channel $c$\\
			    \hline
			    $u_k$ & The mobile profile of $U_k$\\
			    \hline
  		        $a_{km}$ & The state of $R_m$ for $X$ static and $Y$ mobile users\\
			    \hline
			    $t_{kj}$ & The state of $D_{cj}$ for $X$ static users\\
			   	\hline
			    ${\cal S}_s$ & The throughput of the scheduled transmissions\\
			    \hline
			    ${\cal S}_c$ & The throughput of the contended transmissions\\
			    \hline
			    ${\cal S}_0$ & The overall throughput\\
			    \hline
			    $\alpha$ & The ratio of the scheduled periods\\
			    \hline
			    $\beta$ & The ratio of the contended periods\\
			    \hline
			    $t_0$ & The duration of the pilot period\\
			     \hline
			    $t_1$ & The duration of the scheduled transmission period\\
			     \hline
			    $t_2$ & The duration of the computing transmission period\\
			    \hline
			    $t$ & The duration of a data slot\\
			    \hline
			    $\cal{T}$ & The set of $t_0,t_1$, and $t_2$\\
			    \hline
			    $s_k$ & The transmit signal of $U_k$\\
			    \hline
			    $w_k$ & The additive white Gaussian noise\\
			    \hline
			    $\mathbf{\Theta}_{km}$ & The matrix of RIS reflection coefficient of $U_k$\\
			    \hline
			    $\mathbf{\Psi}$ & The matrix of RIS phase shift \\
			    \hline
			    $\bm{\theta}_{km}$ & The vector of phase shift on $R_m$ of $U_k$\\
			    \hline
			    $\text{SNR}_{km}$ & The SNR at the BS from $U_k$ via $R_m$\\
			    \hline
			    $\rho_k^2$ & The transmit power of $U_k$\\
			    \hline
			    $B$ & The total bandwidth\\    
			\hline
			\end{tabular}  
				\end{table}

\section{Considered Multiple Access framework}
\label{MA framework}
In this section, we first introduce the system scenario of RIS-assisted multi-user wireless communications in Section \ref{SS}, and then we present a multiple access framework in Section \ref{FD}.

\subsection{System Scenario}
\label{SS}

%are deployed to assist in the communications from $K$ existed single-antenna users and $Z$ new single-antenna mobile users to a multi-antenna base station (BS) over $C$ frequency band,
We explicitly consider the different mobility profiles of users in practical scenarios (e.g., in smart industries where fixed sensors and mobile robots co-exist.), and improve the efficiency of the MAC protocol and reduce the complexity of the RIS configuration in our multi-user communication scenario by exploiting users' mobility profiles, as illustrated in Fig.~\ref{sys_model}.  In contrast to \cite{ZDing2, XMu1}, the users having similar mobility profiles are grouped together to enhance the interplay of the RIS configuration and MAC protocol. In our scenario, we consider $K$ existing users, $Z$ new mobile users, $M$ RISs each having $N$ reflecting elements, and a BS, where $M$ RISs are employed to assist the communications of $K+Z$ users with the BS over $C$ sub-channels. Here, we define the existing users and new mobile users as follows.
\begin{itemize}
\item \textit{Existing users:} These are the users who are already supported by the network. There are two types of existing users: static users without mobility and mobile users who may move out of the BS's coverage area.
\item \textit{New mobile users:} These are users who are just joining the network in the current frame due to mobility.% for RIS-assisted communications. 
\end{itemize}

\par The set of existing users, RISs, reflecting elements and sub-channels are denoted by ${\cal K}=\{1, \ldots, k, \ldots, K\}$, ${\cal M}=\{1,\ldots,m,\ldots,M\}$, ${\cal N}=\{1,\ldots,n,\ldots,N\}$, and ${\cal C}=\{1,\ldots,c,\ldots,C\}$, respectively. We denote the $k$th user as $U_k, \ k \in \bar{{\cal K}}$, where $\bar{{\cal K}}=\{1,\ldots, k, \ldots, K+Z\}$, while we represent the $m$th RIS as $R_m, \ m \in {\cal {M}}$. Each user is equipped with a single antenna and the BS is equipped with multiple antennas. Each RIS is equipped with a controller connected to the BS. The vector of reflected path between $U_k$ and $R_m$ is denoted by ${\mathbf{g}_{km}}\in \mathbb{C}^{N\times1}$, the vector of reflected path between $R_m$ and BS for $U_k$ is denoted by ${\mathbf{h}}_{km}\in \mathbb{C}^{1\times N}$, and the direct path between $U_k$ and BS is denoted by $r_k$, where $k\in \bar{{\cal K}}, m\in \mathcal{M}$. In the system considered, a quasi-static fading channel model and perfect channel state information (CSI) are assumed\footnote{Channel estimation in RIS-assisted wireless systems is an ongoing field of research with approaches ranging from cascade channel estimation via passive RISs to compressed sensing channel estimation via RISs with minimal and basic sensing capabilities \cite{GCA, SLin}.}. Each RIS is assumed to be equipped with passive elements and operate in the non-overlapping frequency domain\footnote{Compared with the desired reflected signal, the interference power caused by reflections via the remaining RISs in non-overlapping frequency bands is relatively low \cite{SV}, and can be ignored.}. Additionally, we assume that one RIS can be used by one user at a time, and each user has the same payload. We also assume that the BS knows the network state of all existing users\footnote{The new mobile users will be regarded as an existing user in the next frame once it has joined in the current frame, and the BS will update the value of $K$ based on the dynamic network.}.

\begin{figure}[t]
  \captionsetup{font={small}}
\centerline{ \includegraphics[width=3.4in, height=2.7in]{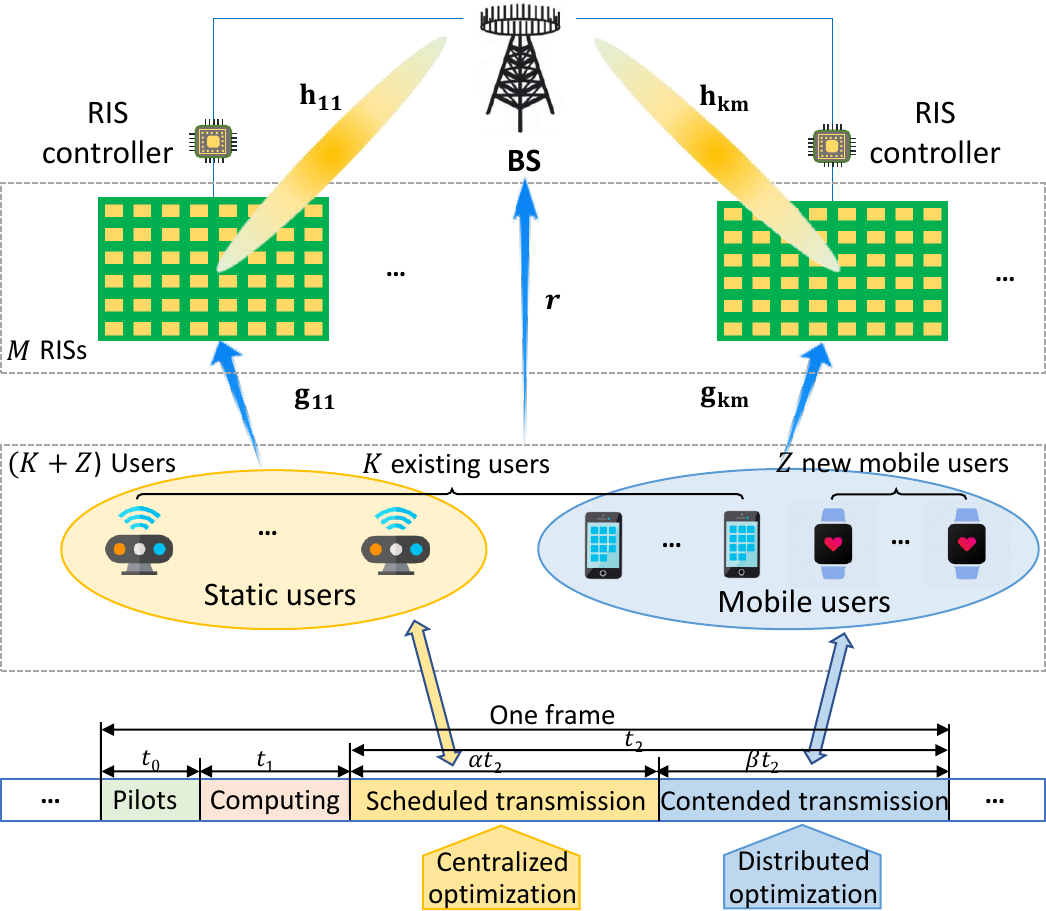}}
%\vspace{1.5mm}
\caption{The multiple access framework for RIS-assisted communications by exploiting users' mobility profiles, where $K+Z$ users communicate with a BS via $M$ RISs. During a frame, based on pilot transmission and computation, static users communicate with the BS by scheduling $M$ RISs, while mobile users communicate with the BS using these RISs by contention. The ratio of the transmission period of two types is $\beta/\alpha$, which is optimized by the BS during the computing period.}
\label{sys_model}
\end{figure}

\subsection{Framework}
\label{FD}
A multiple access framework is conceived for RIS-assisted communications to exploit the symbiotic interplay between the MAC protocol and RIS configuration. The proposed multiple access framework is shown as in Fig. 2, which combines two aspects: 1) the MAC protocol, and 2) the configuration mode of multiple RISs. These two interact with each other according to the dynamic wireless environments. The MAC protocol is designed and optimized on a frame-by-frame basis, and each frame is divided into three periods: a pilot period, $t_0$; a computing period, $t_1$; and a transmission period, $t_2$. The latter consists of the scheduled and the contended transmission periods, the proportion of each is $\alpha$ and $\beta$, respectively. In a frame, based on pilot transmissions and optimizing computation, the proposed MAC protocol switches between the scheduled and the contended modes (i.e., static users are scheduled to communicate with the BS via $M$ RISs, while mobile users are allowed to contend for communications with the BS via the same RISs.). If only static or only mobile users have to be served in a frame, the proposed MAC protocol will become a pure scheduled mode or a pure contended mode. The ratio, $\beta/\alpha$ is optimized by the BS during the computing period. In addition, centralized optimization is used for scheduled transmissions, while distributed optimization is used for contended transmissions. In contrast to the conventional RIS-aided MAC design methods \cite{XCao1}, the proposed multiple access framework incorporates the RIS configuration into the MAC protocol, thereby improving the MAC efficiency as well as reducing the complexity of RIS configuration.

Specifically, the proposed multiple access framework has to achieve the following two challenging goals:
\begin{itemize}
\item For the MAC protocol, the low-complexity RIS configuration helps with improving the system throughput performance of the MAC protocol, while guaranteeing the fairness of users. 
\item For the RIS configuration, an efficient MAC protocol can decrease the RIS configuration complexity and improve the RIS utilization, and thus the system throughput performance of the MAC protocol can be further enhanced. 
\end{itemize}
\begin{figure*}[t] 
\centering
\captionsetup{font={small }}
\includegraphics[width=6.3in,height=1.45in]{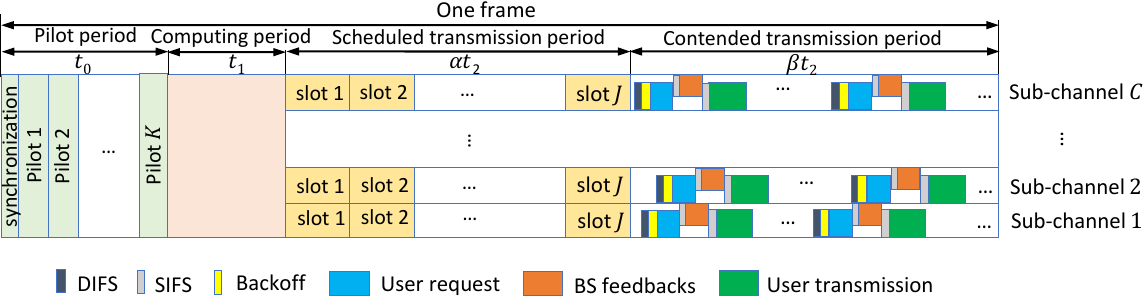}
\hspace{-1.5mm}
\caption{The MAC protocol is a frame-based structure, which integrates the scheduling access and the contention access into a frame.} 
\label{PD}
\end{figure*}
\vspace{-2mm}

\section{Proposed RIS-Assisted MAC Protocol}
\label{MA protocol} 
 \theoremstyle{Remark}
\newtheorem{remark}{\textbf{Remark}} 
\vspace{-1mm}
Based on the proposed multiple access framework and its design objectives, in this section, we present our scheme in terms of two aspects: i) MAC protocol in Section \ref{MACD}, and ii) RIS configuration in Section \ref{RISD}. Then, an intuitive example is illustrated in Section \ref{AIE}.

\subsection{MAC Protocol}
\label{MACD}
Based on the proposed multiple access framework in Section \ref{FD}, a MAC protocol that integrates the centralized and distributed implementations into a frame is presented, as illustrated in Fig. \ref{PD}. The pilot period is further divided into $K$ pilot slots according to the number of existing users. Based on the received pilot transmissions, the BS estimates the CSI of static users and calculates their RIS configuration and MAC protocol parameters during the computing period. Assume that the scheduled transmission period contains $J$ data slots, and the $j$th data slot on sub-channel $c$ is denoted as $D_{cj} , \ c \in {\cal {C}}, \ j \in {\cal {J}}$, where ${\cal J}=\{1,\ldots, j,\ldots,J\}$ is the set of data slots on each sub-channel. Then the BS tightly coordinates the multiple access of static users, following that, the static users transmit their data to the BS via RISs based on the scheduled results. During the contended transmission period, mobile users compute the RIS configuration and contend for their access based on their CSI.

Specifically, after the pilot transmissions and computing, the static users are scheduled, while the mobile users (i.e., the existing mobile users and the new mobile users in the current frame) are allowed to contend. In this context, the implementation of the RIS-assisted MAC protocol goes through the following four steps.

\par \textbf{Step 1: Pilot transmissions}. Based on the synchronization of the BS, during the pilot period, $K$ existing users transmit their pilots to the BS for their RIS-assisted transmissions. 

\par \textbf{Step 2: Computing and feedback}. During the computing period, the BS has to carry out the computations as follows.
\begin{itemize}
\item \textit{User classification}. After receiving the pilot transmission of users, the BS classifies these users according to the known network information. The type of a user is defined as $u_k, \forall k \in {\cal K}$, which is denoted by 
\begin{equation} \label{U} 
u_k = \begin{cases}
1,    & \text{if}\ U_k \text{ is static user,} \ \forall k \in {\cal K}, \\ 
0, & \text{if}\ U_k \text{ is mobile user,} \ \forall k \in {\cal K}.
\end{cases}
\end{equation}  
Thus, the number of static users is denoted as $X=\sum_{k=1}^{K}u_k$. Let ${\cal X} =\{1,\ldots, k, \ldots X\}$ be the set of $X$ static users. 

%In other words, the BS needs to know the number of static users and that of mobile users, respectively.

\item \textit{Channel estimation}. According to the user classification, the BS estimates the involved links of static users based on its pilot transmission on each sub-channel, i.e., $\mathbf{g}_{km}$, $\mathbf{h}_{km}$, and $r_k$, where $k\in \mathcal{X}, m\in \mathcal{M}$. 

\item \textit{Resource allocation}. Based on the channel information of static users, the BS computes the MAC protocol parameters and allocates the slots, power, and RIS resources for static users. To be specific, the BS first computes the duration of the scheduled transmission period and that of the contended transmission period, which is depicted in Section \ref{MA performance}. Then, the BS allocates $M$ RISs and $J$ data slots to static users over $C$ non-overlapping sub-channels. Since each user is only allowed to use one RIS in a frame, we define the state of $M$ RISs for $X$ static users as
\begin{equation} \label{A} 
a_{km} = \begin{cases}
1,    & \text{if}\ U_k \leftarrow R_m, \ \forall k \in {\cal X}, m \in {\cal {M}}, \\ 
0, & \text{Otherwise}, \ \forall k \in {\cal X}, m \in {\cal {M}},
\end{cases}
\end{equation}
where $U_k \leftarrow R_m$ means that $R_m$ is allocated to $U_k$. We then define the state of $J$ data slots for $X$ static users as
\begin{equation}  \label{T} 
t_{kj} = \begin{cases}
1,    & \text{if} \ U_k \leftarrow D_{cj}, \ \forall k \in {\cal X}, j \in {\cal {J}},  c \in {\cal {C}}, \\ 
0, & \text{Otherwise}, \ \forall k \in {\cal X}, j \in {\cal {J}}, c \in {\cal {C}},
\end{cases}
\end{equation}
where $U_k \leftarrow D_{cj}$ means that $D_{cj}$ is allocated to $U_k$, we have $c=m$ since each RIS is bonded to a sub-channel.
\item \textit{RIS phase configuration}. The reflection parameters of $M$ scheduled RISs are computed at the BS to support the transmission of $X$ static users. Note that the computation of RIS configuration should be considered with the MAC protocol parameters, such as $\alpha$, $\beta$, and $t_2$.
\end{itemize}

Based on the above computations, the BS feedbacks the scheduled information to static users during the computing period. 

\par \textbf{Step 3: Scheduled transmission for static users}. During the scheduled transmission period, the BS instructs each RIS controller to configure its reflection parameters and initiates the RIS-assisted transmissions of the scheduled static users. %Meanwhile, the scheduled $X$ static users transmit their data to the BS in the $J$ data slots over the $C$ non-overlapping sub-channels with the aid of designed $M$ RISs. 

\par \textbf{Step 4: Contended transmission for mobile users}. As the designated contended transmission period begins, the unscheduled mobile users (i.e., $K-X$ existing mobile users and $Z$ new mobile users) start their multiple access and compute the RIS configuration by themselves based on the estimated CSI, which is calculated according to the sensing of each mobile user on each sub-channel. In contrast to the scheduled transmission of static users, mobile users have to negotiate with the BS for their channel access and RIS configuration, which is based on the distributed coordination function (DCF) scheme. Here, the results that $Y$ mobile users contend for $M$ RISs can be denoted by 
\begin{equation} \label{A} 
a_{km} = \begin{cases}
1,    & \text{if}\ U_k \rightarrow R_m, \ \forall k \in {\cal Y}, m \in {\cal {M}}, \\ 
0, & \text{Otherwise}, \ \forall k \in {\cal Y}, m \in {\cal {M}},
\end{cases}
\end{equation}
where $U_k \rightarrow R_m$ means that $U_k$ contends for $R_m$ successfully.  ${\cal Y} =\{1,\ldots, k, \ldots Y\}$ is the set of $Y$ mobile users, and $Y=K-X+Z$.

Specifically, the involved four actions at mobile users and the BS are described as follows.

\begin{itemize}
\item \textit{Sensing, computing, and Backoff}. A mobile user senses the state of $C$ sub-channels. Once some sub-channels are sensed to be idle, the mobile user computes the configuration of unused $\hat M (\hat M \le M)$ RISs and occupies an optimal RIS and the according sub-channel, then starts the backoff after a DCF inter-frame space (DIFS).
 
\item \textit{Request}. Once the backoff is finished at the contended sub-channel, the mobile user sends a request-to-send (RTS) packet included its RIS configuration information to the BS on its occupied sub-channel.

\item \textit{Feedback}. If the requested RIS is available for the mobile user on its occupied sub-channel, the BS allows the RIS controller to configure the RIS reflection parameters, and replies a clear-to-send (CTS) packet to the mobile user after a short inter-frame spacing (SIFS).

\item \textit{Transmission}. Following the elapse of a SIFS, once the CTS feedback is received at the mobile user, the mobile user then transmits its data to the BS via the configured RIS on its occupied sub-channel. 
\end{itemize}

Given the dynamic switching between the scheduled mode and the contended mode, our MAC protocol is capable of maintaining the target rate via RISs at a low cost. Additionally, the fairness of the static and mobile users having poor channel conditions can be maintained by scheduling and effective contention, respectively.

By implementing the designed MAC protocol, the overall system throughput is calculated by 
\begin{equation} \label{So} 
{\cal S}_o = \frac{t_2}{t_0+t_1+t_2} (\alpha{\cal S}_s+\beta{\cal S}_c ),
\end{equation}
where ${\cal S}_s$ and ${\cal S}_c$ indicate the throughput of the scheduled transmissions and that of the contended transmissions, respectively. $\alpha$ and $\beta$ are the ratio of the scheduled transmission period and contended transmission period, respectively, and $\alpha+\beta=1, \alpha\in[0,1], \beta\in[0, 1]$. In addition, $t_0,t_1$, and $t_2$ denote as the pilot, the computing, and the transmission periods, respectively, denoted by the set ${\cal T}=\{t_0,t_1,t_2\}$. Due to the introduction of RISs, the overall system throughput in \eqref{So} is not only affected by the MAC protocol but also significantly affected by the RIS configuration.

Introducing RISs enhances the quality of wireless links, thereby attaining the following benefits of the MAC protocol.
\begin{itemize}
\item  \textit{MAC efficiency improvement}. On the one hand, a low-complexity RIS configuration will reduce the computing cost; On the other hand, the access latency of static and mobile users can be reduced by the low-complexity scheduling and the low-contention collision, respectively. Thus, the efficiency of the MAC protocol is improved.
\item  \textit{MAC fairness improvement}. Because of the separation of static and mobile users and the low-complexity operation on them, the fairness of users can be enhanced.
\end{itemize}

 \subsection{RIS Configuration}
 \label{RISD}
As illustrated in Fig. \ref{PD}, the RIS configuration is integrated into the computing period for static users and the contended transmission period for mobile users. By separately configuring RIS phase shifts at the BS and mobile users for the scheduled and contended transmissions, the complexity of the RIS configuration is decreased.

In the system, the user-RIS-BS channel is thus modeled as a composition of two components, namely, the direct path (i.e., the user-BS link) and the reflect path (i.e., the user-RIS-BS path including the user-RIS link and the RIS-BS link). Hence, the signal received at the BS from $U_k$ through both the user-RIS and user-RIS-BS channels is denoted by
\begin{equation}\label{received signal1}
\centering
\!y_k\! =\! \underbrace{r_{k}s_k}_{\text{direct path}}\!+\!\underbrace{\mathbf{h}_{km} \mathbf{\Theta}_{km} \mathbf{g}_{km} s_k}_{\text{reflect path}}+w_k, k \in  \bar{{\cal K}}, \ m \in {\cal {M}}, 
\end{equation}
where $s_k$ represents the transmit signal of $U_k$, and it is an independent random variable with zero mean and unit variance (normalized power). $w_k$ denotes the additive white Gaussian noise (AWGN) at the BS, $w_k \sim {\cal CN} (0, \sigma^2$). $r_{k}$ is the direct link when $U_k$ transmits data to the BS.

% which is given by 
%\begin{equation} 
%r_{km} = \begin{cases}
%r_k,    & \text{if}\ U_k \leftarrow R_m, \ \forall k \in {\cal K}, m \in {\cal {M}}, \\ 
%0, & \text{Otherwise}, \ \forall k \in {\cal K}, m \in {\cal {M}}.
%\end{cases}
%\end{equation} 

In \eqref{received signal1}, $\mathbf{\Theta}_{km}$ is the matrix of the RIS reflection coefficient of $U_k$, which can be expressed as
\begin{equation}\label{channel1}
\centering
\!\mathbf{\Theta}_{km}\! =\!\text{diag}\left(\phi_{km}^{1},\!\ldots,\phi_{km}^{n},\!\ldots,\phi_{km}^{N}\right), \!k \in  \bar{{\cal K}}, \ m \in {\cal {M}}, 
\end{equation}
where $\!\phi_{km}^n\!=\!\gamma_{km}^ne^{j\theta_{km}^n}$ is the reflection coefficient of RIS element $n$ on $R_m$ for $U_k$,  $\{\theta_{km}^{n},\gamma_{km}^{n}\}$ are the phase shift and amplitude reflect coefficient of RIS element $n$ on $R_m$ for $U_k$. In practice, we assume that a continuous phase shift with a constant amplitude reflection coefficient is applied to each RIS element, i.e., $\vert\gamma_{km}^{n}\vert=1, \theta_{km}^n \in [0, 2\pi), k \in  \bar{{\cal K}}, \ m \in {\cal {M}}, \ n \in {\cal {N}}$. Let $\mathbf{\Psi}=[\bm{\theta}_{11},\ldots, \bm{\theta}_{km}, \ldots, \bm{\theta}_{(K+Z)M}]\in \mathbb{C}^{N\times (K+Z)M}$ denote the matrix of the RIS phase shift, where $\bm{\theta}_{km}\!=\![\theta_{km}^1, \ldots, \theta_{km}^n, \ldots, \theta_{km}^N]^T\in \mathbb{C}^{N\times1}$ is the vector of the phase shift of $R_m$ that is aligned to $U_k$. 

Accordingly, the SNR at the BS from $U_k$ via $R_m$ is expressed as
\begin{equation}\label{SNRE1}
\centering 
\!\text{SNR}_{km} = \!\left|\left(r_{k}\!+\!\mathbf{h}_{km} \!\mathbf{\Theta}_{km} \!\mathbf{g}_{km}\right)\!\rho_k\right|^2\!/\!\sigma^2, k \!\in \!\bar{{\cal K}},  m \!\in {\!\cal {M}},
\end{equation}
where $\rho_k^2$ is the transmit power of $U_k$.

As a benefit of the proposed MAC protocol, the RIS configuration has the following three advantages.
\begin{itemize}
\item \textit{Complexity reduction}. By implementing centralized and distributed operations for static and mobile users instead of harnessing centralized operation for all users, the computational complexity of RISs can be substantially reduced in the computing period and the contended transmission period, respectively. 
\item \textit{RIS utilization improvement}. Upon considering dynamic wireless environments, each period of a frame can be adjusted to improve the utilization of RISs. 
\item \textit{Service fairness improvement}. The RISs can serve new mobile users in the contended period of the current frame, thereby providing fairness for the users.
\end{itemize}
\begin{figure}[t]
\captionsetup{font={small }}
\centerline{ \includegraphics[width=3in, height=3.5in]{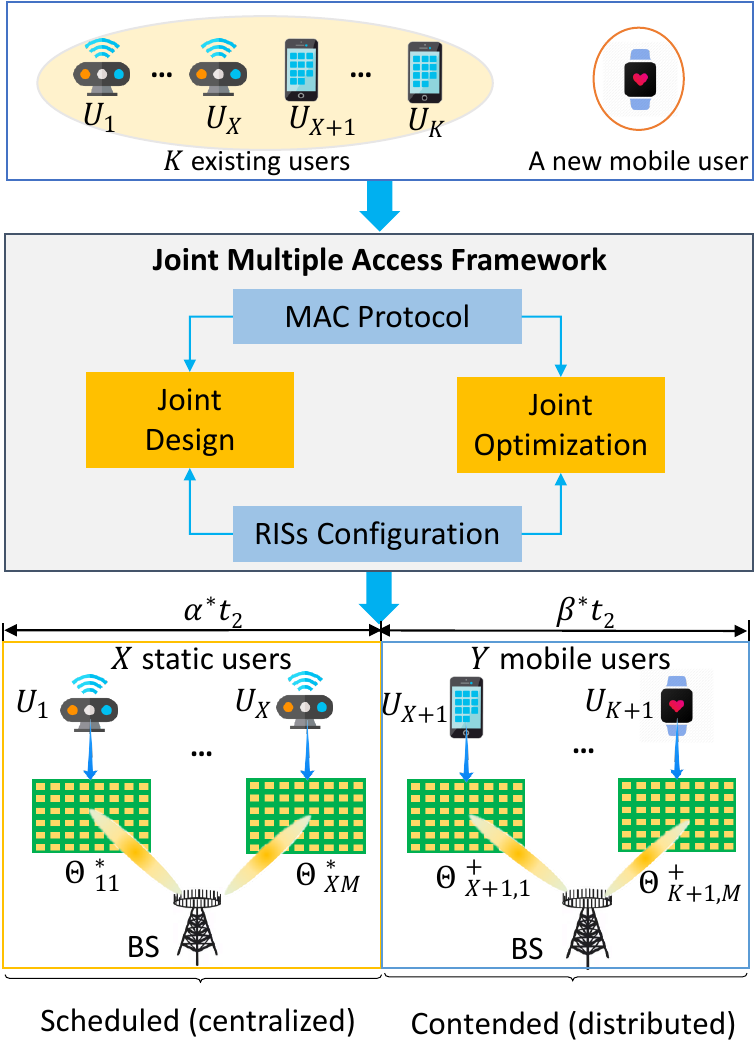}}
%\vspace{1.5mm}
\caption{An intuitive example of the proposed framework.}
\label{case}
\end{figure}

\subsection{An Intuitive Example}
\label{AIE}
An intuitive implementation example with $K+1$ users is illustrated in Fig.~\ref{case}, where $K+1$ users include $K$ existing static and mobile users, and a new mobile user that joins the system in the current frame.

\begin{table}[t] 
\newcommand{\tabincell}[2]{\begin{tabular}{@{}#1@{}}#2\end{tabular}}
		\small
		\centering
			% increase table row spacing, adjust to taste
			\renewcommand{\arraystretch}{1.1}
			%\captionsetup{font={small}} 
			\caption{\scshape Summary of design and optimization} 
			\label{table0}
			\footnotesize
			\centering  
			\begin{tabular}{ c|c|c|c}  
				\hline
				%\rowcolor[gray]{0.9}
               \multicolumn{2}{c|}{\textbf{Multiple access framework}} &\textbf{Static users}& \textbf{Mobile users}\\ 
\hline
	           \multirow{2}{*}{Design} & MAC protocol & Scheduled & Contended\\ 
			   \cline{2-4}
			   \multirow{2}{*}{} & RIS configuration & Centralized & Distributed\\ 
			   \cline{2-4}
			   \hline
			   \multirow{2}{*}{Optimization} & MAC protocol& $\alpha^*, {\cal T}^*$ & $\beta^*, {\cal T}^*$\\  
			   \cline{2-4}
			   \multirow{2}{*}{} & RIS configuration & $\!\mathbf{\Theta}_{km}^*, k\in \mathcal{X}$ & $\!\mathbf{\Theta}_{km}^+, k\in \mathcal{Y}$\\ 
			   \cline{2-4}
			    \hline		   
			\end{tabular}  
	\end{table}
	
In the illustrated example, by implementing the proposed multiple access framework, $K$ existing users are classified into two types, i.e., $X$ static users and $K-X$ mobile users. During the computing period, the scheduled transmission period and the contended transmission period are optimized based on the classification of users, then the BS schedules $X$ static users and $M$ RISs for their RIS-assisted communications in the scheduled transmission period. Next, the existing $K-X$ mobile users and a new mobile user contend for their RIS-assisted communications during the contended transmission period. Based on the optimized RIS configuration in the different transmission periods, each RIS controller is controlled by the BS to support the static users and the mobile users, respectively.  A summary of design and optimization in the multiple access framework, as illustrated in Table~\ref{table0}. 

%Additionally, the access of each static user can be guaranteed, while fairness of each mobile user can be provided.

\section{Performance Analysis}  
\label{MA performance}
In this section, we thoroughly analyze the system throughput performance of the proposed multiple access framework, where the RIS-assisted scheduled transmissions are analyzed in Section \ref{ST}, and the RIS-assisted contended transmissions are analyzed in Section \ref{CT}.
 
\subsection{RIS-Assisted Scheduled Transmissions}
\label{ST}
For the proposed multiple access framework, because of employing TDMA and FDMA schemes for static users, the system throughout of RIS-assisted scheduled transmissions, ${\cal S}_s$, is the sum throughput of each scheduled static user. Thus, ${\cal S}_s$ can be expressed as
\begin{equation}\label{Ss}
\centering 
{\cal S}_s =\frac{1}{\alpha t_2}\sum_{k=1}^{X}\sum_{m=1}^{M}\sum_{j=1}^{J}a_{km}t_{kj}t\frac{B}{C}\log_{2}\left(1+\text{SNR}_{km}\right),
\end{equation} 
where $B$ is the total bandwidth, $\sum_{m=1}^{M}a_{km}=1,\forall k\in \cal X $, and $\sum_{j=1}^{J}t_{kj}=1,\forall k\in \cal X$, $t$ is the duration of a data slot.

%Here, we assume that each static user has one data transmission in one frame.   To guarantee the maximum system throughout of RIS-assisted scheduled transmissions, the total $CJ$ slots on $C$ sub-channels should be perfectly scheduled for $X$ static users. Thus, we have
%\begin{equation}\label{t2}
%\centering 
%\alpha = \frac{J t}{t_2},
%\end{equation} 
%and
%\begin{equation}\label{J}
%\centering 
%J = \frac{X}{C}.
%\end{equation} 

 %which is calculated by $\frac{\alpha t_2}{CJ}$. Since each static user just be allowed to use one RIS in one slot and one sub-channel, $CJ$ is equal to $A$. 
According to \eqref{SNRE1}, \eqref{Ss} can be rewritten as
\begin{equation}\label{Ss1}
\centering 
\!{\cal S}\!_s \!\!=\!\!\frac{Bt}{\!C \alpha  t_2}\!\sum_{k=1}^{X}\!\sum_{m=1}^{M}\!\sum_{j=1}^{J}\!a_{km}\!t_{kj}\!\log\!_{2}\!\left(\!1\!+\!\!\frac{\!\left|\!\left(r_{k}\!+\!\mathbf{h}_{km}\!\mathbf{\Theta}_{km}\mathbf{g}_{km}\!\right)\!\rho_k\!\right|^2}{\!\sigma^2}\!\right)\!,
\end{equation} 
where $\alpha t_2 = Jt$, $\sum_{m=1}^{M}a_{km}=1,\forall k\in \cal X$, and $\sum_{j=1}^{J}t_{kj}=1,\forall k\in \cal X$.

To guarantee the RIS-assisted transmission of each static user, the scheduled transmission period has to meet the following condition,
\begin{equation}\label{J}
\centering 
JC \ge X.
\end{equation}  

From \eqref{Ss1}, it can be observed that the system throughput of RIS-assisted scheduled transmissions is mainly affected by the configuration of RISs. Because the benefit of RISs, the data rate of each scheduled static user is increased significantly. The higher data rate is thereby improving the system throughput of RIS-assisted scheduled transmissions when the value of $\alpha t_2$ is given. In addition, the system throughput of RIS-assisted scheduled transmissions is also improved by optimizing the MAC protocol parameters ($\alpha$ and $t_2$), i.e., using the minimum time to achieve the RIS-assisted transmission of all static users.

\subsection{RIS-Assisted Contended Transmissions}
\label{CT}
Based on the performance analysis of RIS-assisted scheduled transmissions, $X$ static users are scheduled by the BS to use $M$ RISs in the duration of $\alpha t_2$, while the other $K-X$ mobile users, as well as $Z$ new mobile users, are allowed to contend for $M$ RISs in the duration of $\beta t_2$. In other words, total $Y=K-X+Z$ mobile users contend for $M$ RISs in the duration of $\beta t_2$ for their RIS-assisted transmissions. Thus, the performance of RIS-assisted contended transmissions is analyzed as follows.

%in the competed transmission period, and let ${\cal Y} =\{1,\ldots, k, \ldots Y\}$ be the set of Y mobile users, where $Y$ is denoted by 
%\begin{equation}\label{Y}
%\centering 
%Y=K-X+Z.
%\end{equation} 

During the contended transmission period, we assume that each idle sub-channel has an equal probability, $\frac{1}{C}$, to be the best idle channel sensed by a mobile user, and the probability that $V_i$ mobile users select a given channel, $P_{V_i}$, can be given by 
\begin{equation}\label{Pn}
\centering 
\!P_{V_i}\!=\!\left(\!\begin{array}{c} \!N_i \!\\ \!V_i \!\end{array}\!\right)\!\left( \!\frac{1}{C} \!\right)^{V_i}\left(1\!-\! \frac{1}{C} \right)^{N_i\!-\!V_i},
\end{equation} 
where $N_i$ is the number of mobile users that contend for their RIS-assisted transmissions at the $i$th time, and $V_i$ is the number of mobile users that select a given channel at the $i$th time.

Refer to \cite{Bianchi}, at the $i$th time, as $V_i$ mobile users contend for their access at the particular sub-channel, the successful transmission probability ($P_{i,s}$), the idle probability ($P_{i,e}$), and the failed transmission probability ($P_{i,c}$) can be expressed as 
\begin{equation}\label{q1}
\centering
\left\{\begin{array}{l}
P_{i,s}=V_i\tau_i\left(1-\tau_i\right)^{V_i-1}, \\
P_{i,e}=(1-\tau_i)^{V_i-1},\\
P_{i,c}=1-P_{i,e}-P_{i,s},
\end{array} \right.
\end{equation}
where $\tau_i$ is the stationary probability that a mobile user transmits a data packet in a random slot at the $i$th time, which is expressed as
\begin{equation}\label{q2}
\centering
\tau_i=\frac{2(1-2p_i)}{(1 - 2p_i)(W+1) + p_iW(1 - {(2p_i)}^l)}.
\end{equation}
In \eqref{q2}, $W\in[W_{min},W_{max}]$ is the contention window and $l$ is the backoff stage. $W_{min}$ and $W_{max}$ denote the minimum and maximum contention window, respectively. $p_i$ is the collision probability at the $i$th time, which is calculated by  
\begin{equation}\label{q3}
\centering
p_i=1-(1-\tau_i)^{V_i-1}.
\end{equation}

\begin{figure*}[t]
\setcounter{equation}{31}
\begin{equation} \label{So1} 
\!{\cal S}\!_o \!= \!\frac{\!B(t\!+\!t_d)}{\!C\!(\!t_0\!+\!t_1\!+\!t_2\!)}\!\sum_{m=1}^{M} \!\left(\sum_{k=1}^{X}\!\sum_{j=1}^{J}\!a_{km}\!t_{kj}\!\log_{2}\!\left(\!1\!+\!\frac{\!\left|\left(r_{k}\!+\!\mathbf{h}_{km} \mathbf{\Theta}_{km} \mathbf{g}_{km}\!\right)\rho_k\right|^2}{\sigma^2}\!\right) \!+\!\sum_{k=1}^{Y}\!a_{km}\!\log_{2}\!\left(\!1\!+\!\!\frac{\!\left|\left(r_{k}\!+\!\mathbf{h}_{km} \!\mathbf{\Theta}_{km}\! \mathbf{g}_{km}\!\right)\rho_k\right|^2}{\sigma^2}\!\right) \right).
\end{equation}
\hrulefill
\end{figure*}

Therefore, the probability that there is a successful transmission as $V_i$ mobile users select the particular channel is $P_{V_i}P_{i,s}$. Hence, the
successful transmission probability in a given channel at the first time can be expressed as
\begin{align}\label{qa}
\setcounter{equation}{15}
\centering
\!P_{1,C}\!&=\!\sum_{V_1=1}^{N_1}P_{V_1}P_{V_1,s}\\ \notag
&=\!\sum_{V_1=1}^{N_1} \left(\!\begin{array}{c} \!N_1 \!\\ \!V_1 \!\end{array}\!\right)\!V_1\!\tau_1\left(1\!-\!\tau_1\right)\!^{\!V_1\!-\!1}\!\left(\!\frac{1}{C} \!\right)^{V_1}\!\left(1\!-\!\frac{1}{C}\!\right)^{\!N_1\!-\!V_1},
\end{align}
where $N_1=Y$. Note that each mobile user selects its best idle channel based on its individual sensing, and then transmits its data to the BS via the RIS on its selected best idle channel. Moreover, by taking the effect of sensing errors into account in the selection of the best idle channel, \eqref{qa} can be rewritten as follows:
\begin{align}\label{qaa}
\centering
\!P_{1,C}\!=\!\sum_{V_1=1}^{Y} \!P_{1,e}\!\left(\!\begin{array}{c} \!Y \!\\ \!V_1 \!\end{array}\!\right)\!V_1\tau_{1}\!\left(1-\tau_{1}\right)\!^{\!V_1\!-\!1}\!\left(\!\frac{1}{C} \!\right)^{V_1}\!\left(1\!-\!\frac{1}{C}\!\right)^{\!Y\!-\!V_1}.
\end{align}

Refer to \cite{Thilina}, the number of mobile users that successfully transmit at the 
first time can be expressed as
\begin{equation}\label{qa1}
\centering
\tilde{N_1}=\left\lfloor CP_{1,C} \right\rfloor.
\end{equation}

According to \eqref{qa1}, the number of mobile users has to contend for their RIS-assisted transmissions at the second time can be denoted by
\begin{equation}\label{qa2}
\centering
N_2=Y-\left\lfloor CP_{1,C}\right\rfloor.
\end{equation}

Then, the number of mobile users that successfully transmit at the first and second times can be expressed as
\begin{equation}\label{qa3}
\centering
\tilde{N_2}=\left\lfloor C \sum_{l=1}^{2}P_{l,C}\right\rfloor,
\end{equation}
where 
\begin{equation}\label{qa4}
\centering
\!P_{2,C}\!=\!\sum_{\!V_2 \!= 1}^{\!N_2}\!P_{2,e}\!\left(\!\begin{array}{c} \!N_2 \!\\ \!V_2 \!\end{array}\!\right)\!V_2\tau_2\!\left(1\!-\!\tau_2 \right)^{\!V_2\!-\!1}\!\left(\!\frac{1}{C} \!\right)^{V_2}\!\left(1\!-\!\frac{1}{C}\!\right)^{N_2\!-\!V_2}.
\end{equation}

Similarly, the successful transmission probability at the $i$th time and the number of mobile users successfully transmit at the $i$th time can be respectively expressed as
\begin{equation}\label{qa5}
\centering
\!P_{i,C}\!=\!\sum_{\!V_i=1}^{N_i}\!P_{i,e}\left(\!\begin{array}{c} \!N_i \!\\ \!V_i \!\end{array}\!\right)V_i\tau_i\!\left(1\!-\!\tau_i\right)^{V_i-1}\!\left(\!\frac{1}{C} \!\right)^{V_i}\!\left(1\!-\!\frac{1}{C}\!\right)^{N_i\!-\!V_i}
\end{equation}
and 
\begin{equation}\label{qa6}
\centering
\tilde{N_i}=\left\lfloor C \sum_{l=1}^{i}P_{l,C}\right\rfloor.
\end{equation}
In \eqref{qa5}, the number of mobile users has to contend for their access at the $i$th time, $N_i$, is denoted by
\begin{equation}\label{qa7}
\centering
N_i=\left\lfloor Y-C \sum_{l=1}^{i-1}P_{l,C}\right\rfloor.
\end{equation}

Therefore, to guarantee the fairness of each mobile user (i.e., each mobile user can transmit one time), the number of contention times that can successfully meet the requirement of $Y$ mobile users, $N_r$, is expressed as
\begin{equation}\label{qa8}
\centering
N_r=\sum_{i}I\left({N_i}\ge 0\right),
\end{equation}
where
\begin{equation} \label{qa9}
I\left({N_i}\ge 0\right) = \begin{cases}
1,    & \text{if}\ {N_i}\ge 0, \\ 
0, & \text{otherwise}.
\end{cases}
\end{equation}

According to \cite{Bianchi}, it is known that the time length required by one successful contention and transmission can be expressed as
\begin{equation} 
t_r\!=\!RTS\!+CTS\!+\!t_{d}\!+\!2{SIFS}\!+\!{DIFS}\!+\!2\delta,
\label{qa10}
\end{equation}
where $t_d$ is the time length of a payload required by the mobile user. Besides, $RTS$, $CTS$, $SIFS$, and $DFIS$ are the duration of request-to-send (RTS), clear-to-send (CTS), short inter-frame space (SISF), and DCF inter-frame space (DISF), respectively.

Based on the above analysis, the system throughout of RIS-assisted contended transmissions, ${\cal S}_c$, is the sum throughput of each contended mobile user, which can be expressed as
\begin{equation}\label{qa11}
\centering 
{\cal S}_c =\frac{1}{\beta t_2}\sum_{k=1}^{Y}\sum_{m=1}^{M}a_{km}t_d\frac{B}{C}\log_{2}\left(1+\text{SNR}_{km}\right),
\end{equation} 
where $\sum_{m=1}^{M}a_{km}=1,\forall k\in {\cal Y}$.

According to \eqref{SNRE1}, \eqref{qa11} can be rewritten as
\begin{equation}\label{qa12}
\centering 
{\cal S}_c \!=\!\frac{Bt_d}{\!C\!\beta  t_2}\sum_{k=1}^{Y}\sum_{m=1}^{M}\!a_{km}\!\log_{2}\!\left(\!1\!+\!\frac{\!\left|\left(r_{k}\!+\!\mathbf{h}_{km}\! \mathbf{\Theta}_{km}\! \mathbf{g}_{km}\!\right)\rho_k\right|^2}{\sigma^2}\!\right),
\end{equation} 
%where $\sum_{m=1}^{M}a_{km}=1,\forall k\in \cal Y$.

To guarantee the RIS-assisted transmission of each mobile user, the contended transmission period has to meet the following condition
\begin{equation}\label{qa13}
\centering 
\beta t_2 \ge N_rt_r.
\end{equation}

According to \eqref{qa12}, the parameters of the proposed RIS-assisted MAC protocol can be designed as 
\begin{equation}\label{qa14}
\centering 
\beta:\alpha \ge (N_rt_r): (Jt).
\end{equation}

\begin{remark}
Based on the throughput analysis in \eqref{Ss1} and \eqref{qa12}, the overall throughput included the scheduled and contended transmissions can be expressed as \eqref{So1}, where $\sum_{m=1}^{M}a_{km}=1,\forall k\in \cal X$, and $\sum_{m=1}^{M}a_{km}=1,\forall k\in \cal Y$. To improve the performance of the proposed joint multiple access framework, the MAC protocol parameters (e.g., $\cal T$, ${\alpha, \beta}$, and $\rho_k^2$ ) and the RIS configuration parameters (e.g., $a_{km}$ and $\mathbf{\Theta}_{km}$) can be jointly optimized for static and mobile users.
\end{remark}

\section{Joint Optimization and Solution}\label{MA soving}
In this section, we first formulate a joint optimization problem in Section \ref{PF}. We then decompose the original problem in Section \ref{PDS}, and we solve each sub-problem in Section \ref{PMAC} and Section \ref{PRIS}. Finally, we discuss the complexity in Section \ref{CPI}.  

\subsection{Problem Formulation}
\label{PF}
In this paper, we aim for maximizing the overall system throughput by jointly optimizing the MAC protocol and the RIS configuration parameters. Specifically, the proposed joint optimization is formulated as 
\begin{align}
\setcounter{equation}{32}
\mathbb{P}_0: &\underset{\{{\cal T}, {\alpha, \beta}, {t_{kj}}, {\rho_k^2}, a_{km}, {\bm \Psi}\}}{\rm max}\;\; {\cal S}_o \label{problem}  \\ 
{\rm{s}}{\rm{.t}}{\rm{.}} & \;\; t_0\textgreater 0,\;\; t_1\textgreater 0, \;\; t_2\textgreater 0, \tag{\ref{problem}{a}} \label{problema}\\ 
& \;\; \alpha+\beta= 1,\; \frac{\beta}{\alpha} \ge \frac{N_rt_r}{Jt}, \; \alpha\in[0,1],\; \beta\in[0, 1], \tag{\ref{problem}{b}} \label{problemb}\\ 
& \;\; t_{kj} \in \{0,1\}, \;\; \forall k \in {\cal X}, \ \forall j \in {\cal J},  \tag{\ref{problem}{c}} \label{problemc}\\ 
& \;\sum_{j =1}^{J} t_{kj}= 1, \ \forall k \in {\cal X}, \tag{\ref{problem}{d}} \label{problemd}\\
& \;\;  \sum_{k=1}^{X}\rho_k^2 \leq P_{max}, \;\; \forall k \in {\cal X}, \tag{\ref{problem}{e}} \label{probleme}\\ 
& \;\;  \rho_k^2 = \Upsilon, \;\; \forall k \in {\cal Y}, \tag{\ref{problem}{f}} \label{problemf}\\ 
& \;\; \!\frac{B}{C}\!\log\!_{2}\!\left(\!1\!+\!\text{\!SNR}_{\!km}\!\right)\ge R_{min},\ \forall k \in \bar{\cal K}, \forall m \in {\cal M}, \tag{\ref{problem}{g}} \label{problemg}\\ 
& \;\; a_{km}\in \left \{0,1  \right \}, \ \forall k \in \bar{\cal K},  \forall m \in {\cal M}, \tag{\ref{problem}{h}} \label{problemh} \\
& \; \sum_{m =1}^{M} a_{km}= 1,\ \forall k \in {\cal X}, \tag{\ref{problem}{i}} \label{problemi}\\
& \; \sum_{k =1}^{X} a_{km}= J, \ \forall m \in {\cal M}, \tag{\ref{problem}{j}} \label{problemj}\\
& \;\sum_{m =1}^{M} a_{km}= 1, \ \forall k \in {\cal Y}, \tag{\ref{problem}{k}} \label{problemk}\\
& \;\; \vert\gamma_{km}^{n}\vert=1, \ \forall n \in {\cal N}, \ \forall k \in \bar{\cal K}, \ \forall m \in {\cal M}, \tag{\ref{problem}{l}} \label{probleml}  
\\
& \;\; \theta_{km}^n \in [0, 2\pi),  \forall n \in {\cal N}, \ \forall k \in \bar{\cal K}, \ \forall m \in {\cal M},   \tag{\ref{problem}{m}} \label{problemm}  
 \end{align}
where (\ref{problem}a) limits the time length of $t_0, t_1$, and $t_2$. (\ref{problem}b) represents the feasibility of $\alpha$ and $\beta$. (\ref{problem}c) indicates that $t_{kj}$ is a binary value, where $t_{kj}=1$ represents that slot $D_{cj}$ is allocated to $U_k$, otherwise $t_{kj}=0$. (\ref{problem}d) indicates that at most one data slot is used by a static user in a scheduled transmission period. (\ref{problem}e) indicates that the sum transmit power of all static users has to be less than a maximum transmit power, $P_{max}$. (\ref{problem}f) indicates that the transmit power of each mobile user is fixed as $\Upsilon$. (\ref{problem}g) indicates that the data rate of $U_k$ has to be higher than a data rate threshold $R_{min}$. (\ref{problem}h) indicates that $a_{km}$ is a binary value, where $a_{km}=1$ represents that the $R_m$ is used by the $U_k$, and $a_{km}=0$, otherwise. (\ref{problem}i) indicates that at most one RIS is allocated to a static user in a scheduled transmission period. (\ref{problem}j) indicates that one RIS serves $J$ mobile users in a scheduled transmission period. (\ref{problem}k) indicates that at most one RIS is used by a mobile user in a contended transmission period. (\ref{problem}l) and (\ref{problem}m) indicate the feasibility of each RIS element's amplitude and phase shift.
 
We observe that the proposed joint optimization problem $\mathbb{P}_0$ in (\ref{problem}) is an MINLP problem, which is NP-hard and whose globally optimal solution is difficult to obtain by using the common standard optimization approaches. On one hand, since the switching of the scheduled transmissions and the contended transmissions, only the static users have to be scheduled by the BS within $\alpha t_2$, while the mobile users contend for the RISs resources randomly in $\beta t_2$. On the other hand, the RIS configuration of each static user is computed at the BS, while the RIS configuration of each mobile user is computed by itself once it contends for the RIS. Hence, the MAC protocol optimization will significantly affect the computational complexity of the RIS configuration. Moreover, the RIS configuration also significantly affects the overall system throughput of the designed MAC protocol. 

To this end, an alternative optimization method can be invoked as an intuitive approach to solve problem $\mathbb{P}_0$ in (\ref{problem}). Here, we first decompose problem $\mathbb{P}$ into two sub-problems; one is the optimization of the MAC protocol, and the other one is the optimization of the RIS configuration. We then solve them according to the iteration method to achieve the joint design optimization. 

\subsection{Problem Decomposition}
\label{PDS}
By using the Tammer decomposition method, the original problem $\mathbb{P}_0$ is rewritten as
\begin{equation} \label{Pr}
\begin{aligned}
& \hat{\mathbb{P}}_0: \underset{\{ a_{km}, {\bm \Psi} \}}{\rm max}\left(\underset{\{{\cal T}, {\alpha, \beta}, {t_{kj}}, {\rho_k^2}\}}{\rm max}\; {\cal S}_o \right)   \\
&{\rm{s}}{\rm{.t}}{\rm{.}}\;\;\;(\ref{problem}a)-(\ref{problem}m).
 \end{aligned}
\end{equation}

To solve the equivalent problem $\hat{\mathbb{P}}_0$ in \eqref{Pr}, the transformed two sub-problems are illustrated as

\subsubsection{Sub-problem $\mathbb{P}_1$} MAC protocol optimization with fixing the RIS configuration to maximize the overall system throughput, i.e.,
\begin{equation} \label{RA1}
\begin{aligned}
& \mathbb{P}_1: \underset{\{{\cal T}, {\alpha, \beta}, {t_{kj}}, {\rho_k^2}\}}{\rm max}\; {\cal S}_o    \\
&{\rm{s}}{\rm{.t}}{\rm{.}}\;\;\;(\ref{problem}a)-(\ref{problem}g).
 \end{aligned}
\end{equation}

\begin{remark} 
\label{R1}
In \eqref{RA1}, we turn the MAC protocol optimization into further study on the time resource allocation problem and the power resource allocation problem, which can be computed at the BS. Thus, sub-problem $\mathbb{P}_1$ can be transformed as
\begin{equation} \label{Pr1}
\begin{aligned}
& \hat{\mathbb{P}}_1: \underset{\rho_k^2}{\rm max}\left(\underset{\{{\cal T}, {\alpha, \beta}, {t_{kj}}\}}{\rm max}\; {\cal S}_o \right)   \\
&{\rm{s}}{\rm{.t}}{\rm{.}}\;\;\;(\ref{problem}a)-(\ref{problem}g).
 \end{aligned}
\end{equation}
\end{remark}

\subsubsection{Sub-problem $\mathbb{P}_2$} RIS configuration with fixing the MAC protocol parameters to maximize the overall system throughput, i.e.,
\begin{equation} \label{RA2}
\begin{aligned}
& \mathbb{P}_2: \underset{\{a_{km}, {\bm \Psi}\}}{\rm max}\; {\cal S}_o^*    \\
&{\rm{s}}{\rm{.t}}{\rm{.}}\;\;\;(\ref{problem}f)-(\ref{problem}m).
 \end{aligned}
\end{equation}

\begin{remark} 
\label{R2}
In \eqref{RA2}, we turn the RIS configuration into further study on the following two problems: i) the scheduled transmission optimization problem at the BS, and ii) the contended transmission optimization problem at each mobile user. Thus, sub-problem $\mathbb{P}_2$ can be transformed as
\begin{equation} \label{Pr2}
\begin{aligned}
& \hat{\mathbb{P}}_2: \left(\!\underset{\{\!a_{km},  \bm \Psi\}}{\!\rm max}\; {\cal S}_s \right)+\left(\!\underset{\{\!a_{km}, \bm \Psi\}}{\!\rm max}\; {\cal S}_c \right)   \\
&{\rm{s}}{\rm{.t}}{\rm{.}}\;\;\;(\ref{problem}f)-(\ref{problem}m).
 \end{aligned}
\end{equation}
\end{remark}

\subsection{Solution of The MAC Design Problem}
\label{PMAC}
Through observing the objective function and constraints of the problem $\mathbb{P}_0$ in (\ref{problem}), as the usage state of RISs ($a_{km}$) and RIS configuration (${\bm \Psi}$) are fixed, the allocation of data slot (i.e., $t_{kj}$) will not affect the overall system throughput. Thus, solving $\hat{\mathbb{P}}_1$ is equivalent to solving 
the following two problems:
\begin{equation}\label{BA1}
\begin{aligned}
& \mathbb{P}_{1a}: \ \underset{\{{\cal T}, \alpha, \beta\}}{\rm min}\;  
\{t_0+t_1+t_2\}  \\
& {\rm{s}}{\rm{.t}}{\rm{.}}\;\;\;\;\;\;(\ref{problem}a)-(\ref{problem}d),
 \end{aligned}
\end{equation}
and 
\begin{equation}\label{BA2}
\begin{aligned}
&\!\mathbb{P}_{1b}: \ \!\underset{\rho_k^{2}}{\rm max}\; \!\sum_{k=1}^{X}\!\sum_{m=1}^{M}\!a_{km}\!\log_{2}\!\left(\!1\!+\!\frac{\!\left|\left(r_{k}\!+\!\mathbf{h}_{km}\! \mathbf{\Theta}_{km} \!\mathbf{g}_{km}\!\right)\!\rho_k\right|^2}{\sigma^2}\!\right) \\ 
&{\rm{s}}{\rm{.t}}{\rm{.}}\;\;\;\;\;\;(\ref{problem}e),(\ref{problem}g). 
\end{aligned}
\end{equation}

To solve problem $\mathbb{P}_{1a}$, $t_0$, $t_1$, and $t_2$ should be minimized. According to the analysis of $N_r$ and $t_r$ in Section \ref{MA performance}, to guarantee fairness of each user (i.e., each user can use one RIS only one time in a frame), the optimizations of $t_0$, $t_1$, and $t_2$ are given by
\begin{equation}\label{T1}
\centering
\left\{\begin{array}{l}
t_0^*=Kt_p, \\
t_1^*={\cal O}_{min},\\
t_2^*=J^*t+N_rt_r,
\end{array} \right.
\end{equation}
where $t_p$ is the time of one pilot transmission, ${\cal O}_{min}$ is the minimum computational complexity (i.e., the minimum iteration time), and $J^*=\frac{X}{C}$. 

Based on the optimal $t_2^*$, the optimal $\alpha^*$ and $\beta^*$ can be expressed as
\begin{equation}\label{T2}
\centering
\left\{\begin{array}{l}
\alpha^*=\frac{J^*t}{J^*t+N_rt_r},\\
\beta^*=\frac{N_rt_r}{J^*t+N_rt_r}.
\end{array} \right.
\end{equation}
Besides, given $a_{km}$ and ${\bm \Psi}$, problem $\mathbb{P}_{1b}$ can be easily solved using the existed math tools since it is a strictly convex
problem. In the end, the solution of the MAC design problem is illustrated in the proposed Algorithm \ref{alg1}.

\renewcommand{\algorithmicrequire}{\textbf{Initialization:}} 
\renewcommand{\algorithmicensure}{\textbf{Output:}} 
\begin{algorithm}[t]    
\caption{MAC Design Algorithm for Solving $\mathbb{P}_{1}$ at the BS}             
\small
\label{alg1}                  
\begin{algorithmic}[1]             
\REQUIRE $K$, $C$, $M$, $Z$, $t_p$, $t$, $\sigma^2$, ${\bf g}_{km}[0]$, ${\bf h}_{km}[0]$, $\mathbf{\Theta}_{km}[0]$, $a_{km}[0]$, $\rho_k^2[0]$, the maximum iteration number is $L_1$, and set $l_1 = 0$;
%\STATE \textbf{Initialization:} $t=1$, $\mu^{(1)}\triangleq\{\mu(k)^{(1)}, \mu(m)^{(1)}\}_{k\in {\cal K}, m\in {\cal M}}=\O$; 
%\STATE \textbf{\underline{One-to-one Matching}:} 
\STATE  \textbf{for} user $k\in \cal K$, confirms $u_k$;
\STATE  \textbf{end for}
\STATE  Obtains $X$ and $Y$ according to \eqref{U};
\STATE  Solves the optimal $\cal T^*$ according to \eqref{BA1};
\STATE  Calculates $J^*$ according to \eqref{J};
\STATE  Calculates $N_r$ according to \eqref{qa8};
\STATE  Calculates $t_r$ according to \eqref{qa10};
\STATE  Calculates $\alpha^*$ and $\beta^*$ according to \eqref{T2};
\STATE  \textbf{for} user $k\in \cal X$;
\STATE  \textbf{repeat}
\STATE  With fixing ${\bf g}_{km}[l_1]$, ${\bf h}_{km}[l_1]$, $\mathbf{\Phi}_{km}[l_1]$, $a_{km}[l_1]$, obtains \\ ${\rho_k^2}[l_1]$ according to \eqref{BA2}; 
\STATE  Based on the optimal $\cal T^*$, $\alpha^*$, $\beta^*$, and ${\rho_k^2}[l_1]$, obtains ${\bf g}_{km}[l_1+1]$, ${\bf h}_{km}[l_1+1]$, $\mathbf{\Theta}_{km}[l_1+1]$, $a_{km}[l_1+1]$ \\ according to \eqref{SA1} and \eqref{SA2}; 
\STATE  Updates $l_1 \leftarrow l_1 + 1$;
\STATE  \textbf{until} $l_1\ge L_1$;
\STATE  ${\rho_k^2}^*=\rho_k^2$;
\STATE  \textbf{end for}
\ENSURE ${\cal T}^*,\alpha^*, \beta^*$, and ${\rho_k^2}^*$.\\  %算法的迭代：Iteration
\end{algorithmic}
\end{algorithm}

\subsection{Solution of The RIS Configuration Problem}
\label{PRIS}
For the sub-problem $\mathbb{P}_2$, since the MAC protocol parameters (${\cal  T}, \alpha, \beta, \rho_k^2$) are fixed, $\hat{\mathbb{P}}_2$ in \eqref{Pr2} can be decomposed as the centralized RIS configuration and the distributed RIS configuration at the BS and mobile users, respectively. The solution of each problem is presented in Algorithms \ref{alg2} and \ref{alg3}, respectively.

\subsubsection{Centralized RIS Configuration at The BS}

\begin{equation}\label{SA1}
\begin{aligned}
& \!\mathbb{P}_{\!2a}\!: \!\underset{\{\!a_{km}, \!{\bm \Psi}\}}{\!\rm max}\; \! \sum_{k=1}^{X}\!\sum_{m=1}^{M}\!a_{km}\!\log_{2}\!\left(\!1\!+\!\frac{\!\left|\left(r_{k}\!+\!\mathbf{h}_{km}\! \mathbf{\Theta}_{km} \!\mathbf{g}_{km}\!\right)\!\rho_k\right|^2}{\sigma^2}\!\right)  \\
& {\rm{s}}{\rm{.t}}{\rm{.}}\;\;\;\;\;\;(\ref{problem}g)-(\ref{problem}j),(\ref{problem}l) , (\ref{problem}m).
 \end{aligned}
\end{equation}
Problem $\mathbb{P}_{2a}$ is an MINLP problem. To solve problem $\mathbb{P}_{2a}$, multiple alternative iterations between $a_{km}$ and $\bf \Psi$ are operated at the BS to obtain the RIS allocation and RIS phase shifts of $X$ static users.

\begin{itemize}
\item[(a)] \textit{Centralized RIS allocation optimization}. Fixed the phase shifts ($\bm \Psi$) of $M$ RISs, problem $\mathbb{P}_{2a}$ can be rewritten as
\end{itemize}
\begin{equation}\label{SA11}
\begin{aligned}
& \!\mathbb{P}_{2a_1}\!: \underset{a_{km}}{\!\rm max}\; \! \sum_{k=1}^{X}\!\sum_{m=1}^{M}\!a_{km}\!\log_{2}\!\left(\!1\!+\!\frac{\!\left|\left(r_{k}\!+\!\mathbf{h}_{km}\! \mathbf{\Theta}_{km} \!\mathbf{g}_{km}\!\right)\!\rho_k\right|^2}{\sigma^2}\!\right)  \\
& {\rm{s}}{\rm{.t}}{\rm{.}}\;\;\;\;\;\;(\ref{problem}h) -(\ref{problem}j),
 \end{aligned}
\end{equation}
where problem $\mathbb{P}_{2a_1}$ is a $``0-1"$ linear programming problem, which can be solved by the existed math tool. 

\begin{itemize}
\item[(b)] \textit{Centralized RIS phase shift optimization}. Fixed the RIS allocation ($a_{km}$) of $X$ static users, problem $\mathbb{P}_{2a}$ can be rewritten as
\end{itemize}
\begin{equation}\label{SA12}
\begin{aligned}
& \!\mathbb{P}_{2a_2}\!: \underset{\bf \Psi}{\!\rm max}\; \! \sum_{k=1}^{X}\!\sum_{m=1}^{M}\!a_{km}\!\log_{2}\!\left(\!1\!+\!\frac{\!\left|\left(r_{k}\!+\!\mathbf{h}_{km}\! \mathbf{\Theta}_{km} \!\mathbf{g}_{km}\!\right)\!\rho_k\right|^2}{\sigma^2}\!\right)  \\
& {\rm{s}}{\rm{.t}}{\rm{.}}\;\;\;\;\;\;(\ref{problem}g) , (\ref{problem}l), (\ref{problem}m).
 \end{aligned}
\end{equation}

Problem $\mathbb{P}_{2a_2}$ is non-convexity, and the optimization result of $\mathbb{P}_{2a_2}$ can be solved at the BS, which is highlighted in the following observation. %in \textbf{Observation \ref{O1}}. 

\theoremstyle{Observation}
\newtheorem{observation}{\textbf{Observation}}
\begin{observation} \label{O1}
As the RIS allocation ($a_{km}$) is given, the optimal solution to problem $\mathbb{P}_{\!2\!-\!1b}$ is the one that maximizes the channel gain via RISs. With this mind, its optimal solution, $\Psi^*$, is denoted by ${\bf\Psi^*}=\{{{\bm\theta}_{11}}^*,\ldots,{\bm{\theta}_{km}}^*,\ldots,{\bm{\theta}_{KM}}^*\}$, where ${\bm{\theta}_{km}}^*=\{\theta_{km}^{{1}^*}, \ldots, \theta_{km}^{{n}^*}, \ldots \theta_{km}^{{N}^*}\}$, and $\theta_{km}^{{n}^*} = \arg(r_k)-\arg({h}_{km}^{n})-\arg({g}_{km}^{n})$, $ \forall k\in {\cal X},\forall m\in {\cal M}, \forall n\in {\cal N} $.
\end{observation}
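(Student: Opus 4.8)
The plan is to exploit the monotonicity of the objective and reduce the matrix optimization to $N$ independent scalar phase alignments. First I would observe that in $\mathbb{P}_{2a_2}$ the allocation variables $a_{km}$ are fixed binary constants and $\log_2(1+\cdot)$ is strictly increasing, so the objective is maximized term by term: for every pair $(k,m)$ with $a_{km}=1$ it suffices to maximize $\text{SNR}_{km}$, equivalently the combined channel gain $\bigl|r_k+\mathbf{h}_{km}\mathbf{\Theta}_{km}\mathbf{g}_{km}\bigr|^2$. Because each scheduled user occupies its own data slot on the RIS bound to it (constraints (\ref{problem}i) and (\ref{problem}j)), the phase vector $\bm{\theta}_{km}$ attached to each pair is an independent variable — the RIS is reconfigured per slot — so no coupling across users or RISs arises and the problem decomposes into $X$ separate single-link problems.

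Next I would write the reflect path elementwise. Using $\mathbf{\Theta}_{km}=\text{diag}(\phi_{km}^1,\ldots,\phi_{km}^N)$ with $\phi_{km}^n=\gamma_{km}^n e^{j\theta_{km}^n}$ and the constant-modulus constraint (\ref{problem}l), i.e. $|\gamma_{km}^n|=1$, one obtains
\begin{equation*}
\mathbf{h}_{km}\mathbf{\Theta}_{km}\mathbf{g}_{km}=\sum_{n=1}^{N} h_{km}^n g_{km}^n e^{j\theta_{km}^n}.
\end{equation*}
The key step is then the triangle inequality,
\begin{equation*}
\left| r_k+\sum_{n=1}^{N} h_{km}^n g_{km}^n e^{j\theta_{km}^n}\right|\le |r_k|+\sum_{n=1}^{N}|h_{km}^n|\,|g_{km}^n|,
\end{equation*}
whose right-hand side is independent of the phases and hence a valid upper bound on the achievable gain.

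Finally I would show that the bound is attained, which simultaneously identifies the optimizer. Equality in the triangle inequality holds if and only if every reflected contribution is co-phased with the direct term $r_k$, i.e. $\arg\bigl(h_{km}^n g_{km}^n e^{j\theta_{km}^n}\bigr)=\arg(r_k)$ for all $n$; solving for $\theta_{km}^n$ yields exactly $\theta_{km}^{n^*}=\arg(r_k)-\arg(h_{km}^n)-\arg(g_{km}^n)$, as claimed. It then remains to check feasibility: reducing this value modulo $2\pi$ keeps it in $[0,2\pi)$, satisfying (\ref{problem}m), while the modulus is untouched so (\ref{problem}l) holds, and since this choice maximizes $\text{SNR}_{km}$ it maximizes the per-link rate and is therefore the configuration most favorable to the QoS constraint (\ref{problem}g). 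I expect the only genuine subtlety — not a computational obstacle but a modeling point worth stating explicitly — to be the per-pair decoupling, namely that a single physical RIS serving $J$ scheduled users can adopt a distinct profile $\bm{\theta}_{km}$ in each user's time slot, which is precisely what makes the elementwise co-phasing simultaneously optimal for all assigned links.
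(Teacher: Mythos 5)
Your proof is correct and follows essentially the same route as the paper: reduce the rate maximization to maximizing the composite channel gain for each $(k,m)$ pair with $a_{km}=1$, then co-phase every reflected contribution with the direct path, yielding $\theta_{km}^{n^*}=\arg(r_k)-\arg(h_{km}^n)-\arg(g_{km}^n)$. If anything, your single elementwise triangle inequality $\left|r_k+\sum_{n=1}^{N} h_{km}^n g_{km}^n e^{j\theta_{km}^n}\right|\le |r_k|+\sum_{n=1}^{N}|h_{km}^n|\,|g_{km}^n|$ with its equality condition is cleaner than the paper's two-step argument, whose intermediate bound $|r_k+\mathbf{h}_{km}\mathbf{\Theta}_{km}\mathbf{g}_{km}|^2\le |r_k|^2+|\mathbf{h}_{km}\mathbf{\Theta}_{km}\mathbf{g}_{km}|^2$ omits the cross term and is not a valid inequality as written; the paper then separately maximizes $|\mathbf{w}_{km}\mathbf{\Phi}_{km}|^2$ under unit-modulus constraints to arrive at the same per-element phases.
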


\begin{proof}
As for the solution of problem $\mathbb{P}_{\!2\!-\!1b}$, we have the following inequality
\begin{align}\label{ph}
\centering
\vert r_{k} +\mathbf{h}_{km} \mathbf{\Theta}_{km} \mathbf{g}_{km} \vert^2 \le \vert r_k \vert^2+\vert \mathbf{h}_{km} \mathbf{\Theta}_{km} \mathbf{g}_{km}\vert^2 . 
\end{align} 
The equality in \eqref{ph} holds only when the RIS reflection coefficients are equal to $\arg(r_k) \triangleq \arg(\mathbf{h}_{km} \mathbf{\Theta}_{km} \mathbf{g}_{km})$. 

To optimize ${\theta_{km}^{n}}, n \in {\cal N}$, we let $\mathbf{h}_{km} \mathbf{\Theta}_{km} \mathbf{g}_{km}\!=\!\mathbf{w}_{km}\mathbf{\Phi}_{km}$, where $\mathbf{w}_{km}\!=\![w_{km}^{1},\ldots, w_{km}^{n},\ldots, w_{km}^{N}]\!\in\!\mathbb{C}^{1\times(N)}$, $w_k^{n}\!=\!e^{j\theta_{km}^n}, \forall n \in \cal N$, and ${\mathbf{\Phi}}_{km}\!=\!\text{diag}(\mathbf{h}_{km})\mathbf{g}_{km}$. Then, problem $\mathbb{P}_{\!2\!-\!1b}$ can be simplified as 
\begin{align}
& \!\mathbb{P}_{3}\!: \mathop {\max} \limits_{\mathbf{w}_{km}}\ \vert \mathbf{w}_{km} \mathbf{\Phi}_{km} \vert^2\ \ \ \ \ \ \ \ \ \ \ \ \ \ \ \ \ \ \ \ \ \ \ \ \ \ \ \label{problemm} \\ \notag
& {\rm{s}}{\rm{.t}}{\rm{.}}\;  \vert {w_{km}}^{n} \vert^2=1, \ \ \forall k\in {\cal X},\forall m\in {\cal M}, \forall n\in {\cal N}.
\end{align}
It is observed that the optimal phase shift of RIS element $n$ on the $R_m$ for $U_k$ can be obtained by setting $\mathbf{w}_{km}^*=e^{j({\arg(r_k)-\arg(\text{diag}(\mathbf{h}_{km})\mathbf{g}_{km}))}}$, and then we have  
\begin{align}\label{p22bb}
\theta_{km}^{{n}^*} = \arg(r_k)-\arg({h}_{km}^{n})-\arg({g}_{km}^{n}),
\end{align}
where $h_{km}^{n}\in \mathbf {h_k}$, $g_{km}^{n}\in \mathbf {g_k}$,  $\forall k\in {\cal X}, \forall m\in {\cal M}, \forall n\in {\cal N}$. 
\end{proof}

\subsubsection{Distributed RIS Configuration at Mobile Users}
 
\begin{equation}\label{SA2}
\begin{aligned}
& \!\mathbb{P}_{2b}\!: \!\underset{\{\!a_{km}, {\bf \theta}_{km} \}}{\!\rm max}\; \! \!a_{km}\!\log_{2}\!\left(\!1\!+\!\frac{\!\left|\left(r_{k}\!+\!\mathbf{h}_{km}\! \mathbf{\Theta}_{km} \!\mathbf{g}_{km}\!\right)\!\rho_k\right|^2}{\sigma^2}\!\right) \\
& {\rm{s}}{\rm{.t}}{\rm{.}}\;\;\;\;\;\;(\ref{problem}g),(\ref{problem}h),(\ref{problem}k)-(\ref{problem}m).
 \end{aligned}
\end{equation}

To solve problem $\mathbb{P}_{2b}$ at each mobile user, the distributed RIS allocation and RIS phase shift problems are solved using an alternative method.
\begin{itemize}
\item[(a)]\textit{Distributed RIS allocation optimization}. Since each mobile user contends for its RIS, fixed the RIS phase shift ($\bm \theta_{km}$), problem $\mathbb{P}_{2b}$ can be rewritten as
\end{itemize}
\begin{equation}\label{SA21}
\begin{aligned}
& \!\mathbb{P}_{2b_1}\!:  \underset{ a_{km} }{ \rm max}\;  a_{km}\!\log_{2}\!\left(\!1\!+\!\frac{\!\left|\left(r_{k}\!+\!\mathbf{h}_{km}\! \mathbf{\Theta}_{km} \!\mathbf{g}_{km}\!\right)\!\rho_k\right|^2}{\sigma^2}\!\right)  \\
& {\rm{s}}{\rm{.t}}{\rm{.}}\;\;\;\;\;\;(\ref{problem}h),(\ref{problem}k),
 \end{aligned}
\end{equation}
where $\mathbb{P}_{2b_1}$ can be easily solved by finding the best RIS link for $U_k$. 
 
\begin{itemize}
\item[(b)]\textit{Distributed RIS phase shift optimization}. Fixed the RIS allocation ($a_{km}$), problem $\mathbb{P}_{2b}$ can be rewritten as
\end{itemize}
\begin{equation}\label{SA22}
\begin{aligned}
& \!\mathbb{P}_{2b_2}\!:  \underset{ {\bf \theta}_{km} }{ \rm max}\;  a_{km}\!\log_{2}\!\left(\!1\!+\!\frac{\!\left|\left(r_{k}\!+\!\mathbf{h}_{km}\! \mathbf{\Theta}_{km} \!\mathbf{g}_{km}\!\right)\!\rho_k\right|^2}{\sigma^2}\!\right)  \\
& {\rm{s}}{\rm{.t}}{\rm{.}}\;\;\;\;\;\;(\ref{problem}g) , (\ref{problem}l), (\ref{problem}m).
 \end{aligned}
\end{equation}

\renewcommand{\algorithmicrequire}{\textbf{Initialization:}} 
\renewcommand{\algorithmicensure}{\textbf{Output:}} 
\begin{algorithm}[t]    
\caption{Centralized RIS Configuration Algorithm for Solving $\mathbb{P}_{2a}$ at the BS}             
\small
\label{alg2}                  
\begin{algorithmic}[1]             
\REQUIRE $X$, $C$, $M$, $\sigma^2$, ${\bf g}_{km}[0]$, ${\bf h}_{km}[0]$, $\mathbf{\Theta}_{km}[0]$, $a_{km}[0]$, $\rho_k^2[0]$, the maximum iteration number is $L_2$, and set $l_2 = 0$;
%\STATE \textbf{Initialization:} $t=1$, $\mu^{(1)}\triangleq\{\mu(k)^{(1)}, \mu(m)^{(1)}\}_{k\in {\cal K}, m\in {\cal M}}=\O$; 
%\STATE \textbf{\underline{One-to-one Matching}:} 
\STATE  \textbf{repeat}
\STATE  With given $\cal X$, $\cal T^*$, $\alpha^*$, ${\rho_k^2}^*$, $a_{km}[l_2]$, ${\bf g}_{km}[l_2]$, and ${\bf h}_{km}[l_2]$, obtains $\mathbf{\Psi}[l_2]$ at the BS according to \eqref{SA12}, where $k\in \cal X$;
\STATE  Based on $\mathbf{\Psi}[l_2]$, obtains $a_{km}[l_2+1]$, ${\bf g}_{km}[l_2+1]$, and ${\bf h}_{km}[l_2+1]$ according to \eqref{SA11}, where $k\in \cal X$; 
\STATE  Updates $l_2 \leftarrow l_2 + 1$;
\STATE  \textbf{until} $l_2 \ge L_2$;
\STATE  $a_{km}^*=a_{km}$, $\mathbf{\Psi}^*=\mathbf{\Psi}$;
\ENSURE $a_{km}^*$ and $\mathbf{\Psi}^*, \forall k\in \cal X$.\\  %算法的迭代：Iteration
\end{algorithmic}
\end{algorithm}

\begin{algorithm}[t]    
\caption{Distributed RIS Configuration Algorithm for Solving $\mathbb{P}_{2b}$ at mobile users}             
\small
\label{alg3}                  
\begin{algorithmic}[1]             
\REQUIRE $ Y$, $ C$, $ M$, $ \sigma^2$, $ {\bf g}_{km}[0]$, $ {\bf h}_{km}[0]$, $\!\mathbf{\Theta}_{km}[0]$, $a_{km}[0]$, $\rho_k^2[0]$, the maximum iteration number is $L_3$, and set $l_3 = 0$;
%\STATE \textbf{Initialization:} $t=1$, $\mu^{(1)}\triangleq\{\mu(k)^{(1)}, \mu(m)^{(1)}\}_{k\in {\cal K}, m\in {\cal M}}=\O$; 
%\STATE \textbf{\underline{One-to-one Matching}:} 
\STATE  \textbf{repeat}
\STATE  With given $\cal Y$, $\cal T^*$, $\beta^*$, ${\rho_k^2}^*$, $a_{km}[l_3]$, ${\bf g}_{km}[l_3]$, and ${\bf h}_{km}[l_3]$, obtains $\mathbf{\theta}_{km}[l_3]$ at $U_k$ according to \eqref{SA22}, where $k\in \cal Y$; 
\STATE  Based on $\mathbf{\theta}_{km}[l_3]$, obtains $a_{km}[l_3+1]$, ${\bf g}_{km}[l_3+1]$, and ${\bf h}_{km}[l_3+1]$ according to \eqref{SA21}, where $k\in \cal Y$; 
\STATE  Updates $l_3 \leftarrow l_3 + 1$;
\STATE  \textbf{until} $l_3 \ge L_3$;
\STATE  $a_{km}^*=a_{km}$, $\mathbf{\theta}_{km}^*=\mathbf{\theta}_{km}$;
\ENSURE $a_{km}^*$ and $\mathbf{\theta}_{km}^*, \forall k\in \cal Y$.\\  %算法的迭代：Iteration
\end{algorithmic}
\end{algorithm}

Problem $\mathbb{P}_{2b_2}$ is non-convexity. The optimization result of $\mathbb{P}_{\!2b_2}$ can be solved at each mobile user, and some comments are highlighted in the following observation. %in\textbf{ Observation \ref{O2}}. 

\begin{observation} \label{O2}
As the RIS, $R_m$, is obtained by $U_k$, the optimal solution of problem $\mathbb{P}_{\!2\!-\!2b}$ is the one that maximizes the channel gain of $U_k$ via $R_m$. Therefore, its optimal solution, ${\bm{\theta}_{km}}^*$, is denoted by ${\bm{\theta}_{km}}^*=\{\theta_{km}^{{1}^*}, \ldots, \theta_{km}^{{n}^*}, \ldots \theta_{km}^{{N}^*}\}$, where $\theta_{km}^{{n}^*} = \arg(r_k)-\arg({h}_{km}^{n})-\arg({g}_{km}^{n})$, $ \forall k\in {\cal Y},\forall m\in {\cal M}, \forall n\in {\cal N} $.
\end{observation}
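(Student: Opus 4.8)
The plan is to exploit the fact that this single-user phase-shift problem has exactly the same structure as the one already settled in Observation~\ref{O1}, the only difference being that the distributed setting fixes attention on the single RIS $R_m$ that $U_k$ has secured by contention, so the double sum over static users in $\mathbb{P}_{2a_2}$ collapses to the single reflected contribution of $U_k$ through $R_m$. First I would note that, with the allocation $a_{km}$ fixed to $1$, the objective in \eqref{SA22} is a monotonically increasing function of $\text{SNR}_{km}$, which is itself increasing in the composite channel gain $\vert r_k+\mathbf{h}_{km}\mathbf{\Theta}_{km}\mathbf{g}_{km}\vert^2$; hence maximizing the rate is equivalent to maximizing this gain over the feasible phases, subject to the unit-modulus constraint (\ref{problem}l) and the range constraint (\ref{problem}m).

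Next I would bound the channel gain using the triangle inequality exactly as in \eqref{ph}, writing $\vert r_k+\mathbf{h}_{km}\mathbf{\Theta}_{km}\mathbf{g}_{km}\vert\le\vert r_k\vert+\vert\mathbf{h}_{km}\mathbf{\Theta}_{km}\mathbf{g}_{km}\vert$, with equality precisely when the reflected path is co-phased with the direct path, i.e. $\arg(\mathbf{h}_{km}\mathbf{\Theta}_{km}\mathbf{g}_{km})=\arg(r_k)$. To then maximize the reflected-path magnitude itself, I would expand $\mathbf{h}_{km}\mathbf{\Theta}_{km}\mathbf{g}_{km}=\sum_{n=1}^{N}h_{km}^{n}e^{j\theta_{km}^{n}}g_{km}^{n}$ and observe that this sum of $N$ complex terms attains its maximum modulus exactly when all terms share a common phase. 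Choosing that common phase to coincide with $\arg(r_k)$ simultaneously enforces the co-phasing condition and maximizes the reflected magnitude, which forces $\theta_{km}^{n}+\arg(h_{km}^{n})+\arg(g_{km}^{n})=\arg(r_k)$ for every $n$, yielding the claimed closed form $\theta_{km}^{{n}^*}=\arg(r_k)-\arg(h_{km}^{n})-\arg(g_{km}^{n})$ after reducing the right-hand side modulo $2\pi$ so that (\ref{problem}m) is respected.

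The only genuinely new point relative to Observation~\ref{O1} — and therefore the step I would treat most carefully — is justifying that the distributed per-user optimization is legitimate. Because each mobile user has already won a distinct RIS through the contention mechanism of Step~4, the phase vectors $\bm{\theta}_{km}$ associated with different contending users are mutually decoupled, so maximizing each user's own channel gain in isolation is equivalent to maximizing the aggregate contended throughput ${\cal S}_c$ in \eqref{qa12}. I would also verify that the constant-amplitude design $\vert\gamma_{km}^{n}\vert=1$ makes the triangle-inequality bound tight at the aligned phases, so that the per-element solution is indeed globally optimal for $U_k$. Everything else reduces to the same phase-alignment argument used for the static case, so no additional machinery is required.
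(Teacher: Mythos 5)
Your proposal is correct and follows essentially the same route as the paper, which proves Observation~2 simply by deferring to the phase-alignment argument of Observation~1 (triangle-inequality bound, co-phasing the reflected path with the direct path, and aligning each of the $N$ reflected terms to $\arg(r_k)$). Your additional remark that the per-user problems decouple because each contending mobile user has secured a distinct RIS is a sensible explicit justification of a step the paper leaves implicit, but it does not change the substance of the argument.
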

\begin{proof}
The proof of \textbf{Observation \ref{O2}} refers to \textbf{Observation \ref{O1}}.
\end{proof}

\subsection{Complexity and Performance Improvement}
\label{CPI}
\subsubsection{Computational Complexity}
To solve problem $\mathbb{P}_0$ in (\ref{problem}), the major involved complexity includes solving the MAC design problem, the centralized RIS configuration problem, and the distributed RIS configuration problem. 

\begin{itemize}
\item  Complexity for solving the MAC design optimization problem: To get the optimal MAC protocol parameters, the complexity is decided by the computing of ${\cal T}^*, \alpha^*, \beta^*$, and $\rho_k^* $. The involved computational complexity in Algorithm \ref{alg1} is ${\cal O}(K+X^3L_1+M^2N^2L_1+X^2M^2L_1)$.

\item  Complexity for solving the centralized RIS configuration problem: By using the alternative iteration method, the involved complexity is decided by the computing of the RIS allocation and the RIS phase shifts of $X$ static users at the BS. Therefore, the involved computational complexity at the BS in Algorithm \ref{alg2} is ${\cal O}(M^2N^2L_2+X^2M^2L_2)$.

\item  Complexity for solving the distributed RIS configuration problem: By using the alternative iteration method, the involved complexity is decided by the computing of the distributed RIS allocation and RIS phase shifts at a mobile user. Therefore, the involved computational complexity at a mobile user in Algorithm \ref{alg3} is ${\cal O}(\hat M^2N^2L_3+\hat M^2L_3)$, where $\hat M$ is the number of unused RISs.
\end{itemize}

%As a result, the computational complexity at the BS is given by $ {\cal O}\left(1+X^3L_1+\left(N^2+M^3\right)L_1L_2\right)$. 
It can be seen that the computational complexity at the BS can be reduced since the RIS configuration for static and mobile users is optimized by the BS and each mobile user, respectively.

\subsubsection{MAC Protocol Performance Improvement by RISs}
In the proposed multiple access framework, the benefit of RISs contains two folds: i) RISs assist each user to improve the data rate, and ii) the different RISs design for static and mobile users can improve channel efficiency due to the low computational complexity. Based on these, the MAC protocol performance improvement due to RISs is presented in the following observation. %\textbf{Theorem~\ref{Tm0}}.

\begin{observation}
\label{Tm0}
For the proposed multiple access framework, denote $\Delta \cal O$ as the decrement of computational complexity. The performance improvement brought by decreasing RIS configuration complexity is given by $\frac{T + \Delta \cal O}{T}$, and the improvement brought by designing RIS configuration is limited by $\sum_{k=1}^{X}\sum_{m=1}^{M}a_{km}\xi_{km}+\sum_{k=1}^{Y}\sum_{m=1}^{M}a_{km}\chi_{km}$, where $\xi_{km}, \forall k \in {\cal X}, \forall m \in {\cal M}$ and $\chi_{km}, \forall k \in {\cal Y}, \forall m \in {\cal M}$ are the data rate increment of $U_k, k\in \bar{\cal K}$ benefited from $R_m$.
\end{observation}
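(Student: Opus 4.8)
The plan is to prove the two assertions separately, since they concern two distinct mechanisms: the frame time freed by distributing the RIS computation, and the per-link data rate gained by coherently configuring the RIS.

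For the first assertion, I would start from the throughput expression in \eqref{So}, whose frame-efficiency factor is $\frac{t_2}{t_0+t_1+t_2}$, and isolate the dependence on the computing period $t_1$. The computing time is governed by the computational complexity tabulated in Section~\ref{CPI}; moving the per-user RIS configuration of the mobile users from the BS to the users themselves (the distributed operation of Algorithm~\ref{alg3}) removes a term from the BS workload, which I denote $\Delta{\cal O}$. Positing the natural proportionality between the number of operations and the computing time (a fixed processor rate), this $\Delta{\cal O}$ translates into a saving in $t_1$. Holding the total frame budget fixed, the freed interval is reassigned to the transmission period, so that the useful fraction scales from its baseline value by a factor $\frac{T+\Delta{\cal O}}{T}$, where $T$ collects the baseline time budget. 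Forming the ratio of post- to pre-reduction throughput then yields the stated factor.

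For the second assertion I would invoke Observations~\ref{O1} and \ref{O2}. Under the optimal phase profile $\theta_{km}^{n*}=\arg(r_k)-\arg(h_{km}^n)-\arg(g_{km}^n)$, the direct and reflected paths combine coherently, so $\text{SNR}_{km}$ attains its largest admissible value for the pair $(U_k,R_m)$. I would define the per-link rate increment as the difference between the RIS-assisted rate $\frac{B}{C}\log_2(1+\text{SNR}_{km})$ and the direct-link rate $\frac{B}{C}\log_2\!\left(1+|r_k\rho_k|^2/\sigma^2\right)$, naming it $\xi_{km}$ for a scheduled static user and $\chi_{km}$ for a contended mobile user. Summing these contributions over the served links, weighted by the allocation indicators $a_{km}$ (with $\sum_m a_{km}=1$ selecting each user's single assigned RIS), and using the additive decomposition of the overall throughput in \eqref{So1}, gives the aggregate gain $\sum_{k=1}^{X}\sum_{m=1}^{M}a_{km}\xi_{km}+\sum_{k=1}^{Y}\sum_{m=1}^{M}a_{km}\chi_{km}$. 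Since the phases of Observations~\ref{O1} and \ref{O2} already maximize every $\text{SNR}_{km}$, no admissible configuration can exceed these per-link increments, so the aggregate is an upper bound, which is precisely the ``limited by'' claim.

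The main obstacle is the first assertion: the quantity $\Delta{\cal O}$ is a count of arithmetic operations (a Big-${\cal O}$ expression from Section~\ref{CPI}), whereas the throughput formula is expressed in time. Making this step rigorous requires an explicit operations-to-time conversion (a processor clock rate) together with a careful accounting of how the saved interval is re-allocated within the fixed frame; absent such a convention the factor $\frac{T+\Delta{\cal O}}{T}$ is only a heuristic proportionality. The second assertion is comparatively routine, resting solely on the coherent-combining optimality already established in Observations~\ref{O1} and \ref{O2} and on the throughput decomposition in \eqref{So1}.
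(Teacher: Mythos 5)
Your proposal matches the paper's own justification, which is itself only a heuristic rather than a formal proof: the paper likewise sets $T=t_0+t_1+t_2$, takes $\Delta{\cal O}\approx{\cal O}\left(K^3L_1\right)-{\cal O}\left(X^3L_1\right)$ as the BS-side complexity saved by handling only the $X$ static users centrally (the same static/mobile separation you invoke, though it pins the saving to the cubic MAC-design term rather than to the offloaded RIS configuration), and defines $\xi_{km}$ and $\chi_{km}$ exactly as you do, namely $B\left(\log_2\left(1+\text{SNR}_{km}\right)-\log_2\left(1+\left|r_k\rho_k\right|^2/\sigma^2\right)\right)$. The dimensional mismatch you flag in the factor $\frac{T+\Delta{\cal O}}{T}$ is genuine and is not resolved in the paper either — it asserts the factor with no operations-to-time conversion — so your account is, if anything, more careful on that point.
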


In \textbf{Observation~\ref{Tm0}}, $T$ and $\Delta \cal O$ are expressed as
\begin{equation}
\label{cc}
\centering
\left\{\begin{array}{l}
T = t_0+t_1+t_2, \\
\Delta {\cal O} \approx {\cal O}\left(K^3L_1\right)-{\cal O}\left(X^3L_1\right).
\end{array} \right.
\end{equation}
Since the computational complexity at the BS in the proposed MAC framework is mainly affected by the number of static users, i.e., the computational complexity without considering mobility profiles is affected by ${\cal O}(K^3L_1)$, and that considering mobility profiles is affected by ${\cal O}(X^3L_1)$, the decrement of computational complexity, $\Delta \cal O$, can be obtained in \eqref{cc}.

Moreover, $\xi_{km}$ and $\chi_{km}$ are calculated as
\begin{align} 
\!\{\xi\!_{km},\chi\!_{km}\!\} \!&=\! \!B\left(\!\log_2\!\left(1\!+\!\text{SNR}\!_{km}\right)\!-\log_2\!\left(\!1\!+\!\frac{\!\left|r\!_{k} \rho_k\!\right|^2}{\sigma^2}\!\right)\right)\\ \notag
&=\!B \!\log_2\left( \frac{\kappa+\Delta\kappa}{\kappa}\right),
\end{align}
where $\kappa=\sigma^2+r_k^2\rho_k^2$, and $\Delta\kappa=\left |\mathbf{h}_{km}\! \mathbf{\Theta}_{km} \!\mathbf{g}_{km} \right |^2\rho_k^2  \!+2r_k\left |\mathbf{h}_{km}\! \mathbf{\Theta}_{km} \!\mathbf{g}_{km} \right|\rho_k^2 $.

\section{Simulation Resutls}\label{result}

\subsection{Simulation Settings}

\subsubsection{Network Scenario} We consider a network scenario that consists of a BS, 2 RISs having 128 RIS-elements each, and 200 users (includes 100 static users, 80 mobile users, and 20 new users, the ratio of different types of users is denoted by $\epsilon=5:4:1$). All the users are uniformly distributed in a square area of size $50\times50$ (in meters) with the BS located at ($0$, $0$, $100$) in three-dimensional Cartesian coordinates. From a practical implementational perspective, the location of RISs is generally fixed. Hence the location of RISs is given by ($25$, $50$, $50$), and ($50$, $25$, $50$) is illustrated in Fig. \ref{simulation_set}. Unless stated otherwise, the other simulation parameters are set as follows. A Rician fading channel model is assumed, where the user-RIS and RIS-BS channels benefit from the existence of LoS links having a path loss exponent of 2.2, while the user-BS channels are NLoS links with a path loss exponent of 3.6. The power dissipated at each user is 10 dBm, the noise power is -94 dBm, the number of sub-channels is 2, and the bandwidth of each sub-channel is $B = 10$ MHz. The payload size of each user is $500$ bytes. The packet size of RTS and CTS is set to $24$ bytes and $16$ bytes, respectively. SIFS and DIFS are set to $10$ $\mu$s and $50$ $\mu$s, respectively. The minimum contention window $W_0$ is set to $15$, and the maximum backoff stage $m$ is set to $6$. Furthermore, we assume that each RIS occupies a single sub-channel as a benefit of interference cancellation, and each user is only allowed to use a single RIS to communicate with the BS at a time. Lastly, when the number of users and that of RISs changes, the results have to be evaluated on a frame-by-frame basis.

\begin{figure}[t]
  \centering
  \captionsetup{font={small }}
  % Requires \usepackage{graphicx}
  \includegraphics[width=2.5in, height=2in]{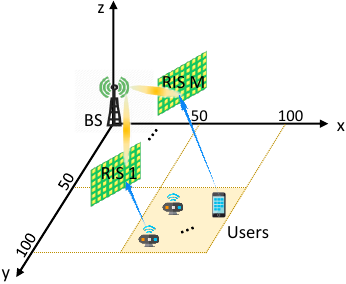}
  \caption{Simulation setup of the proposed multiple access framework.}
  \label{simulation_set}
\end{figure}

\subsubsection{Benchmark Schemes} We consider the following benchmark MAC schemes in the results for comparison.

\begin{itemize}
\item  The centralized multiple access scheme (Scheme 1): The BS schedules the resources and configures RISs for users included the static and the mobile users without contention.

\item  The distributed multiple access scheme (Scheme 2): The static and the mobile users contend for their resources and configures RISs by themselves without scheduling.

\end{itemize}

\subsection{Performance Evaluation}
Figure \ref{S1} evaluates the system throughput of three types of RIS-assisted MAC schemes as the number of users increases. Firstly, it is observed that the system throughput of each scheme is improved with the assistance of RISs. This is because the RIS can help user increase its data rate by improving its link performance. Fig. \ref{S1} shows that the system throughput of Scheme 1 and that of the proposed scheme initially increase, but then tend to saturate as the number of users increases. This is because the computation time ratio of each frame is reduced upon increasing the length of each frame. Then, the system throughput of Scheme 2 exhibits a slight lesion after saturation due to the collisions. Additionally, as the number of users increases, the system throughput of the proposed scheme becomes higher than that of Scheme 1 and Scheme 2. This is because the computational complexity of Scheme 1 and the collision of Scheme 2 increase as the number of users increases. However, as the number of users increases, these two factors can be significantly alleviated in the proposed scheme since the different MAC protocols and RISs configurations are used for the static and mobile users. 

\begin{figure}[t]
  \centering
  \captionsetup{font={small }}
  \includegraphics[width=3.3in, height=2.3in]{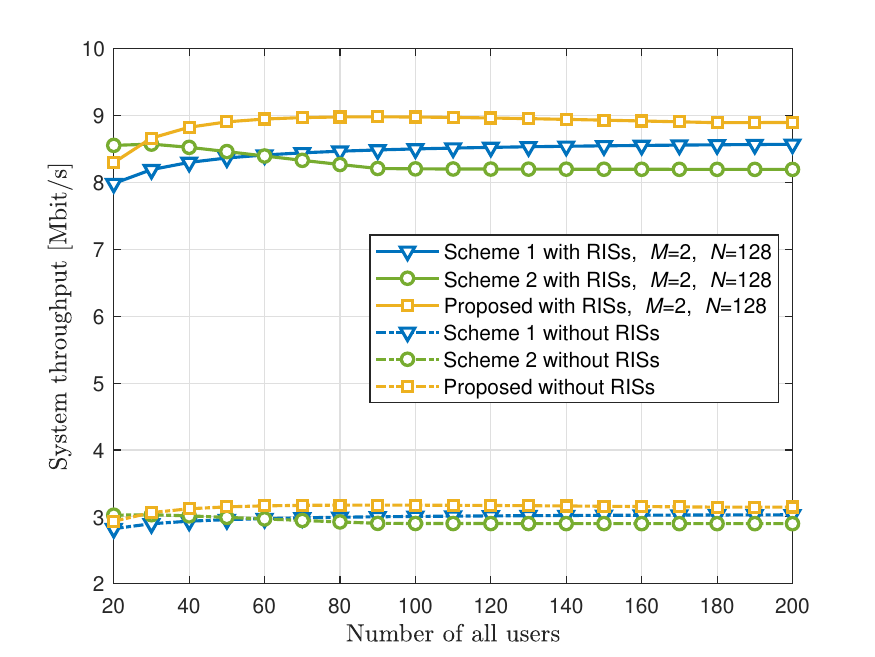}
  \caption{System throughput comparison among the three schemes with RISs or not.}
  \label{S1}
\end{figure}

\begin{figure}[t]
  \centering
  \captionsetup{font={small }}
  \includegraphics[width=3.3in, height=2.3in]{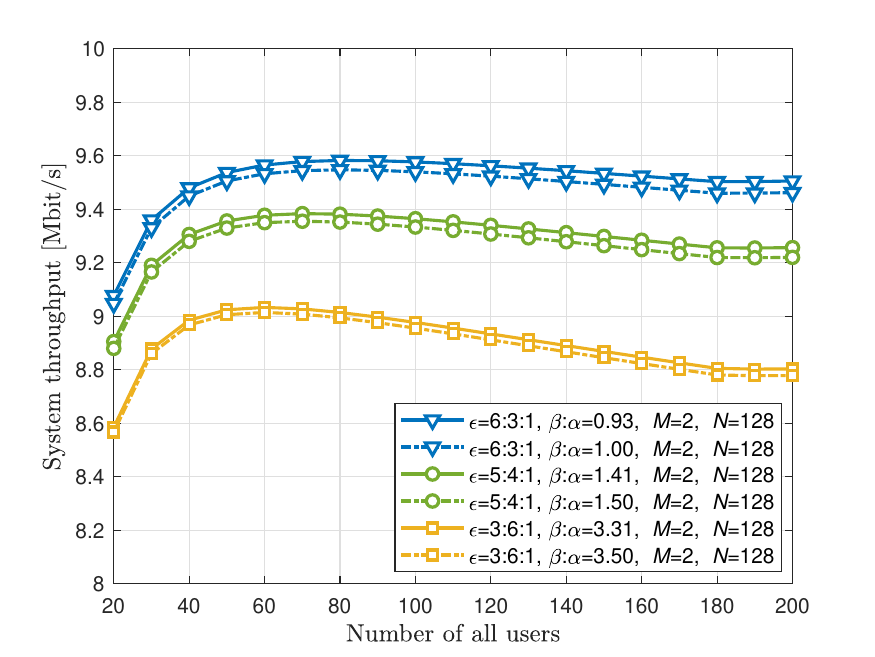}
  \caption{Impact of $\epsilon$ on optimal $\beta/\alpha$ on the system throughput of the proposed scheme.}
  \label{S4}
\end{figure}

\begin{figure}[t]
  \centering
  \captionsetup{font={small }}
  \includegraphics[width=3.3in, height=2.3in]{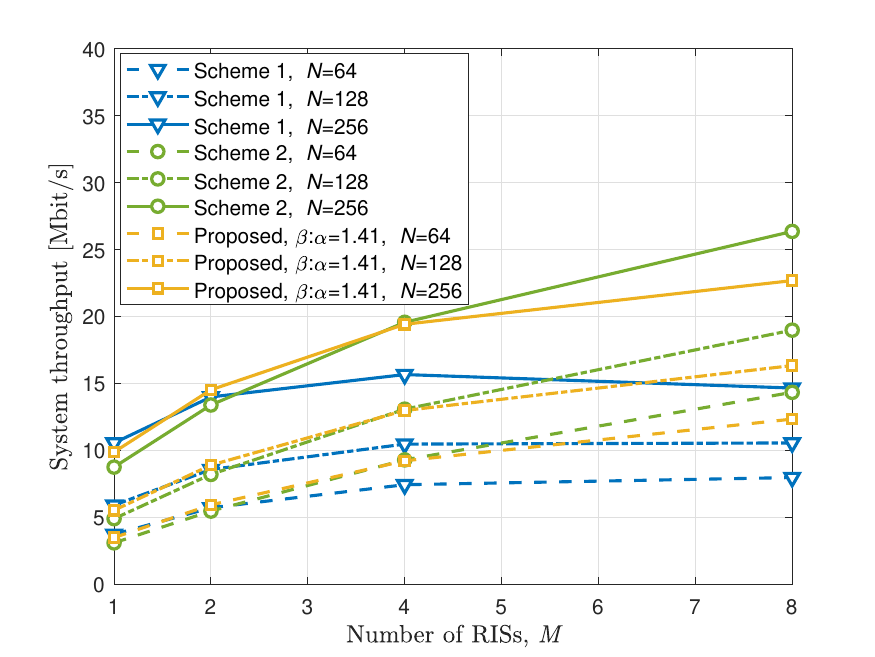}
  \caption{Impact of $N$ and $M$ on the system throughput of the three schemes with $\epsilon\!=\!5\!:\!4\!:\!1$, where the number of users is 200.}
  \label{S12}
\end{figure}

\begin{figure}[t]
  \centering
  \captionsetup{font={small }}
  \includegraphics[width=3.3in, height=2.3in]{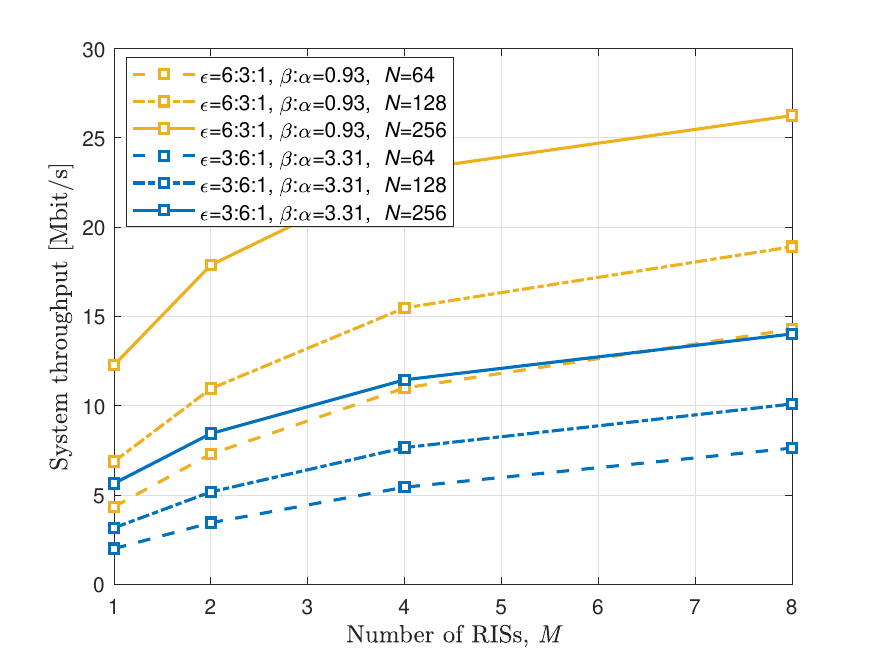}
  \caption{Impact of $\epsilon, N$, and $M$ on the system throughput of the proposed scheme, where the number of users is 200.}
  \label{S11}
\end{figure}

As the number of users increases, Fig. \ref{S4} evaluates the system throughput of the proposed scheme in terms of the MAC protocol parameters. As shown in Fig. \ref{S4},  when the ratio of different types of users, $\epsilon$, is set as $(6\!:\!3\!:\!1), (5\!:\!4\!:\!1)$, and $(3\!:\!6\!:\!1)$, the system throughput of the proposed scheme decreases accordingly. This is because the ratio of mobile users is increased and that of static users is decreased, which leads to a bigger $\beta/\alpha$. In the proposed scheme, as $\beta/\alpha$ increases, the system throughput of the proposed scheme decreases, and the decrement of system throughput gradually becomes obvious. Here, the bigger $\beta/\alpha$ means that the longer time will be spent on the contended transmissions. However, the longer time also serves the same number of mobile users and transmits the same payloads. In other words, the access efficiency of the proposed scheme decreases as $\beta/\alpha$ increases. This is mainly because the channel resources are not fully used during the contended transmission period. Additionally, given $\epsilon$, the scheduled transmission period is fixed once the number of static users is given, it is therefore existing an optimal $\beta/\alpha$ that maximizes the system throughput of the proposed scheme, while guaranteeing the fairness of users, e.g., the optimal $\beta/\alpha$ is $0.93, 1.41$ and $3.31$ for each setting of $\epsilon$. Also, for each setting of $\epsilon$, the system throughput has a minor decline as the number of users increases. 

Figure \ref{S12} evaluates the system's throughput of the three schemes in terms of the number of RIS elements and that of RISs, respectively. Observe in Fig. \ref{S12} that the system throughput of each scheme first increases and then saturates as the number of RISs increases. This is because the access efficiency of each scheme can be enhanced by having more RISs. However, more RISs will increase the computational complexity. Give $\epsilon\ (5\!:\!4\!:\!1)$, compared to Scheme 1 and Scheme 2, the proposed scheme performs better when the number of RISs is less than 4. This is because the proposed scheme decreases the computational complexity of static users, while also decreases the collision of mobile users. As the number of RISs is greater than 4, the throughput of Scheme 2 performs best due to the distributed computation and the less collision. Additionally, the throughput of each scheme increases as the number of RIS elements increases due to the enhanced link performance.

Figure \ref{S11} evaluates the system throughput of the proposed scheme in terms of $\epsilon$, the number of RIS elements, and the number of RISs, respectively. Fig. \ref{S11} shows that the system throughput of the proposed scheme increases as the number of RISs increases for each setting of $\epsilon$. This is because the consumed time handling the same traffic payload of all users decreases. Also, as the number of RISs increases, the increment of the system throughput gradually decreases. This is because more time has to been spent on the RIS configuration. Besides, compared to $\epsilon\ (3\!:\!6\!:\!1)$, a better system throughput can be obtained in the setting of $\epsilon\ (6\!:\!3\!:\!1)$. This is because the more static users are scheduled with a low computational cost and the fewer mobile users decrease the collision. At last, for each setting of $\epsilon$, the system throughput of the proposed scheme increases as the number of RIS elements increases.
\begin{figure}[t]
  \centering
  \captionsetup{font={small }}
  \includegraphics[width=3.3in, height=2.3in]{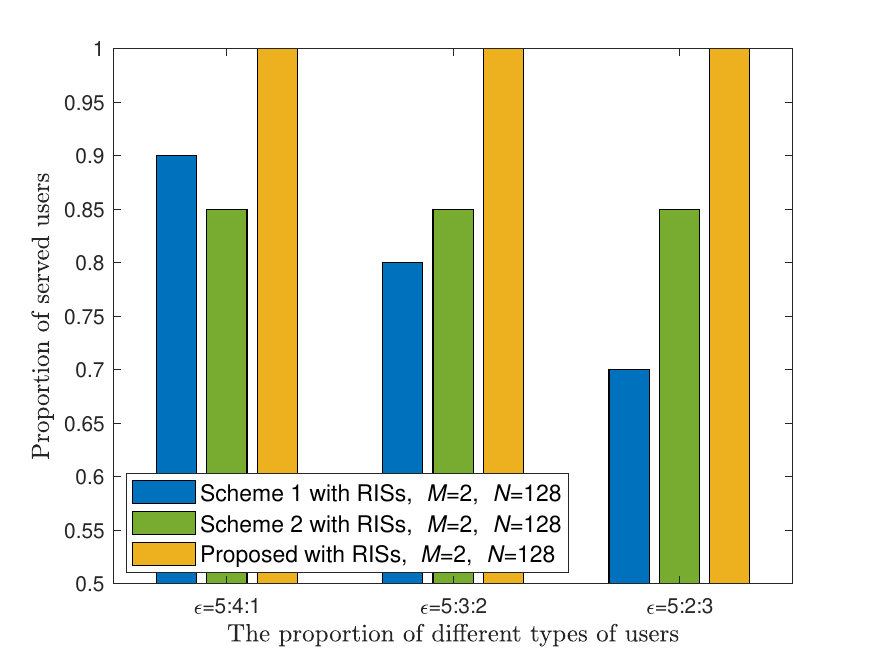}
  \caption{Impact of $\epsilon$ on the fairness of the three schemes, where the optimal $\beta/\alpha$ is selected for the proposed scheme based on each $\epsilon$, and the number of users is 200.}
  \label{S7}
\end{figure}

Figure \ref{S7} evaluates the proportion of served users via RISs in three schemes when $\epsilon$ is set as $(5\!:\!4\!:\!1)$, $(5\!:\!3\!:\!2)$, and $(5\!:\!2\!:\!3)$, respectively. It is observed from Fig. \ref{S7} that the proportion of served users gradually decreases in Scheme 1, while keeps unchanged in the proposed scheme and Scheme 2 when the setting of $\epsilon$ varies from $(5\!:\!4\!:\!1)$ to $(5\!:\!2\!:\!3)$. Compared to Schemes 1 and 2, the proposed scheme serves all users in each case. To be specific, the proposed scheme performs best, and Scheme 2 outperforms Scheme 1 when more new mobile users join in, which can be explained by two factors: 1) the new mobile users always cannot be served in the current frame by RISs in Scheme 1 due to its centralized scheduling; 2) Scheme 2 cannot serve all users since it brings more overheads compared to Scheme 1 in the same transmission time. In Fig. \ref{S8}, the proportion of served users via RISs with the different settings of $\epsilon$ and $\beta/\alpha$ is observed. When the setting of $\epsilon$ is $(6\!:\!3\!:\!1)$, $(5\!:\!4\!:\!1)$, and $(3\!:\!6\!:\!1)$, the proportion of served users in setting of $\epsilon$ is highest at the optimal $\beta/\alpha$ valued $0.93$, $1.41$, and $3.31$. Additionally, if $\beta/\alpha$ is larger than its optimal value in each setting of $\epsilon$, the fairness of each user can be achieved, otherwise, the fairness cannot be guaranteed. Combined with the results in Fig. \ref{S4}, we conclude that there is a trade-off between the system throughput and fairness in the proposed scheme, which is achieved by calculating the optimal $\beta/\alpha$ for a specific $\epsilon$.
\begin{figure}[t]
  \centering
  \captionsetup{font={small }}
  \includegraphics[width=3.3in, height=2.3in]{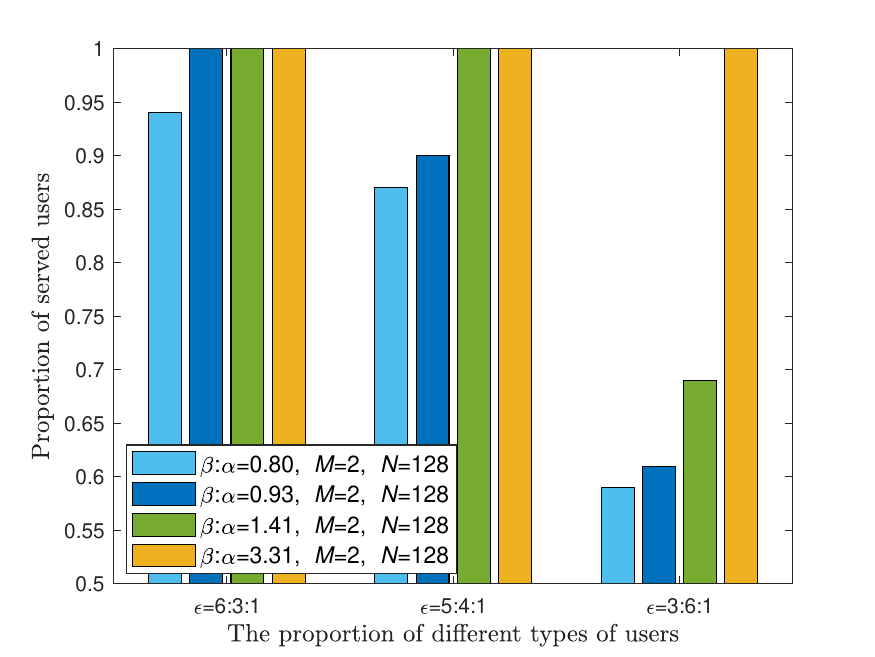}
  \caption{Impact of $\epsilon$ and $\beta/\alpha$ on the fairness of the proposed scheme, where the number of users is 200.}
  \label{S8}
\end{figure}

\section{Conclusion}\label{conclusion}

In this paper, we have conceived a multiple access framework for RIS-assisted communications by integrating the MAC protocol and the RIS configuration into a unified framework. The proposed framework improves the MAC efficiency, while reducing the RISs' computational complexity and offering fairness for multiple users. Our MAC protocol allows different types of users to be assigned to RISs and channel resources by scheduling or contention schemes. By exploiting the interplay between the MAC protocol and the RIS configuration, we have investigated the joint optimization problem of both designs to maximize the overall throughput, while guaranteeing fairness for the users. Then, we have adopted the popular alternative iteration method to obtain the problem's solution. Finally, simulation results have been presented to evaluate the MAC protocol in terms of its throughput and fairness. We have seen that, compared to the benchmarks, the proposed scheme achieves better access fairness and system throughput. As the number of RISs and users increase, harnessing RISs in our multi-user communications scenario substantially expands the resource allocation search space, hence resulting in increased computing complexity and latency. In this context, an AI-based method that shifts the complexity of online computation to offline training may be a potential technique for further addressing the complexity and latency issues of our RIS configuration and MAC protocol \cite{ZZhang, GCA1,add2}.

% \begin{appendices} 
% \label{App}       
% \section{Proof of Theorem 2}       
% 
% \end{appendices}

\end{document}